\newtheorem{theorem}{Theorem}
\newtheorem{lemma}{Lemma}
\newtheorem{definition}{Definition}
\newcommand{\adds}{\stackrel{\text{\tiny{+}}}{\gets}}
\renewcommand{\familydefault}{ppl}
\author{Vandy {\scshape Berten}, Chi-Ju {\scshape Chang}, Tei-Wei {\scshape Kuo}
 ~ \\
 ~ \\
National Taiwan University\\
Computer Science and Information Engineering dept.\\
\{vberten, ktw\}@csie.ntu.edu.tw, james299kimo@gmail.com \\
}
\title{Discrete Frequency Selection of Frame-Based Stochastic Real-Time Tasks}
\begin{document}

\maketitle
\thispagestyle{empty}
\begin{abstract}
\begin{small}
Energy-efficient real-time task scheduling has been actively explored in the past decade. Different from the past work, this paper considers schedulability conditions for stochastic real-time tasks. A schedulability condition is first presented for frame-based stochastic real-time tasks, and several algorithms are also examined to check the schedulability of a given strategy. An approach is then proposed based on the schedulability condition to adapt a continuous-speed-based method to a discrete-speed system. The approach is able to stay as close as possible to the continuous-speed-based method, but still guaranteeing the schedulability. It is shown by simulations that the energy saving can be more than 20\% for some system configurations.\\
\textbf{Keywords: }Stochastic low-power real-time scheduling, frame-based systems, schedulability conditions.
\end{small}
\end{abstract}

\section{Introduction}
%Because more and more devices need improved autonomy, because more and more devices need better mobility, or simply because climate change enforces everybody to take energy considerations into account,
In the past decade, energy efficiency has received a lot of attention in system designs, ranged from server farms to embedded devices. With limited energy supply but an increasing demand on system performance, how to deal with energy-efficient real-time task scheduling in embedded systems has become a highly critical issue. There are two major ways in frequency changes of task executions: Inter-task or intra-task dynamic voltage scaling (DVS). Although Intra-task DVS seems to save more energy, the implementation is far more complicated than Inter-task DVS. Most of the time we need very good supports from compilers or/and operating systems, that is often hard to receive for many embedded systems. On the other hand, inter-task DVS is easier to deploy, and tasks might not be even aware of the deployment of the technology.

Energy-efficient real-time task scheduling has been actively explored in the past decade. Low-power real-time systems with stochastic or unknown duration have been studied for several years. The problem has first been considered in systems with only one task, or systems in which each task gets a fixed amount of time. Gruian \cite{Gruian01,Gruian01b} or Lorch and Smith \cite{pace01,pace04} both shown that when intra-task frequency change is available, the more efficient way to save energy is to increase progressively the speed. 
%First developed for ideal processor with a continuous range of speeds, more practical
Solutions using a discrete set of frequencies and taking speed change overhead into account have also been proposed \cite{Xu04,Xu07b}. For inter-task frequency changes, some work has been already undertaken. In \cite{Mosse00}, authors consider a similar model to the one we consider here, even if this model is presented differently. %They consider one periodical task systems, but in which the task can be splitted in several independent sub-tasks, allowing to change the frequency between each sub-task. This is then roughly equivalent to considering several independent tasks, with inter-task speed changing.
The authors present several dynamic power management techniques: Proportional, Greedy or Statistical. They don't really take the distribution of number of cycles into account, but only its maximum, and its average for Statistical. According to the strategy, a task will give its slack time (the difference between the worst case and the actual number of used cycle) either to the next task in the frame, or to all of them.
In \cite{Aydin01}, authors attempt to allow the manager to tune this aggressiveness level, while in \cite{Xu07b}, they propose to adapt automatically this aggressiveness using the distribution of the number of cycles for each task. %They first propose a technique for ideal continuous DVS processor (OITDVS), and then adapt it to take the discrete number of speeds and changes overhead into account (PITDVS).
The same authors have also proposed a strategy taking the number of available speeds into account from the beginning, instead of patching algorithms developed for continuous speed processors \cite{Xu07}. Some multiprocessor extensions have been considered in \cite{Chen07}.

Although excellent research results have been proposed for energy-efficient real-time task scheduling, little work is done for stochastic real-time tasks, where the execution cycles of tasks might not be known in advance. In this paper, we are interested in frame-based stochastic real-time systems with inter-task DVS, where frame-based real-time tasks have the same deadline (also referred as the frame). Note that the frame-based real-time task model does exist in many existing embedded system designs, and the results of this paper can provide insight in the designs of more complicated systems. Our contribution is twofold: First, we propose a schedulability test, allowing to easily know if a frequency selection will allow to meet deadlines for any task in the system. %This is done by first giving a general model for scheduling functions in the system we consider, and then by providing some conditions on those functions.
As a second contribution, we provide a general method allowing to adapt a method designed for a continuous set of speeds (or frequencies) into a discrete set of speeds. This can be done more efficiently than classically by using the schedulability condition we give in the first part.
Apart from this alternative way of adapting continuous strategy, we will show how this schedulability test can be used in order to improve the robustness to parameters variation. The capability of the proposed approach is demonstrated by a set of simulations, and we show that the energy saving can be more than 20\% for some system configurations.

% HERE
The rest of this paper is organized as follows: we first present the mathematical model of a real-time system that we consider in Section~\ref{sec:model}. We then present our first contribution in Section~\ref{sec:sched}, which consists in schedulability conditions and tests for the model. 
We then use those results in Section~\ref{sec:discret} and \ref{sec:ext} to explain how we can improve the discretization of continuous-speed-based strategies, and show the efficiency of this approach in the experimental part, in Section~\ref{sec:experim}, and finally conclude in Section~\ref{sec:ccl}.

\section{Model}
\label{sec:model}
We have $N$ tasks $\{T_i, i\in [1,\dots,N]\}$ which run on a DVS CPU. They all share the same deadline and period $D$ (which we call the \emph{frame}), and are executed in the order $T_1$, $T_2$, \dots, $T_N$. The maximum execution number of cycles of $T_i$ is $w_i$. Task $T_i$ will require $x$ cycles with a probability $c_i(x)$, where $c_i(\cdot)$ is then the distribution of the number of cycles. Of course, in practical, we cannot use a so precise information, and authors usually group cycles in ``bins''. For instance, we can choose to use a fixed bin system, with $b_i$ the size of the bins. In this case, the probability distribution $c'_i(\cdot)$ is such that $c'_i(k)$ represent the probability to use between $(k-1)\times b_i$ (excluded) and $k \times b_i$ (included) cycles.

The system is said to be \emph{expedient} if a task never waits intentionally. In other words, $T_1$ starts at time $0$, $T_2$ starts as soon as $T_1$ finishes, and so on.

The CPU can run at $M$ frequencies (or speeds) $f_1<f_2<\dots<f_M$, and the chosen frequency does not change during task execution. The mode $j$ consumes $P_j$ Watts.% per unit of time.

We assume we have $N$ \emph{scheduling functions} $S_i(t)$ for $i \in [1,\dots, N]$ and $t \in [0, D]$. This function means that if $T_i$ starts its execution at time $t$, it will run until its end at frequency $S_i(t)$, where $S_i(t)\in \{f_1, f_2, ..., f_M\}$. $S_i(t)$ is then a step function (piece-wise constant function), with only $M$ possible values.
Remark that $S_i(t)$ is not necessarily an increasing or a monotonous function. This model generalizes several scheduling strategies proposed in the literature, such as  \cite{Xu07, Xu07b} -- where they consider a function corresponding to $S_i(D-t)$ --, or discrete versions of \cite{Mosse00}.
Figure~\ref{Sit} shows an example of such scheduling function set.

A scheduling function can be represented by a set of points (black dots on Figure~\ref{Sit}), representing the beginning of the step. $\mid S_i \mid$ is the number of steps of $S_i$. $S_i[k], k\in\{1, \dots, \mid S_i \mid\}$ is one point, with $S_i[k].t$ being its time component, and $S_i[k].f$ the frequency. $S_i$ has then the same value $S_i[k].f$ in the interval $\Big[S_i[k].t, S_i[k+1].t\Big[$ (with $S_i[\mid{}S_i\mid+1].t = \infty$), and we have $S_i(t) = S_i[k].f$, where
$$
k = \max \Big\{j\in \{1, \dots, \mid S_i \mid \} :  S_i[j].t \le t\Big\}.
$$
Notice that finding $k$ can be done in $\mathcal{O}(\log\mid{}S_i\mid)$ (by binary search), and, except in the case of very particular models, $\mid S_i \mid \le M$.

We first assume that changing CPU frequency does not cost any time or energy. See Section~\ref{sec:overhead} for extensions.

The scheduling functions $S_i(t)$ can be pretty general, but have to respect some constraints in order to ensure the system schedulability and avoid deadline misses.

%In some situations, it is more convenient to chose a frequency according to the remaining time in the frame, instead of the elapsed time. We define for this a new function:
%$$\hat{S}_i(d) = S_i(D-d).$$

\begin{figure}[!ht]
\begin{center}

\scalebox{0.52}{
\input{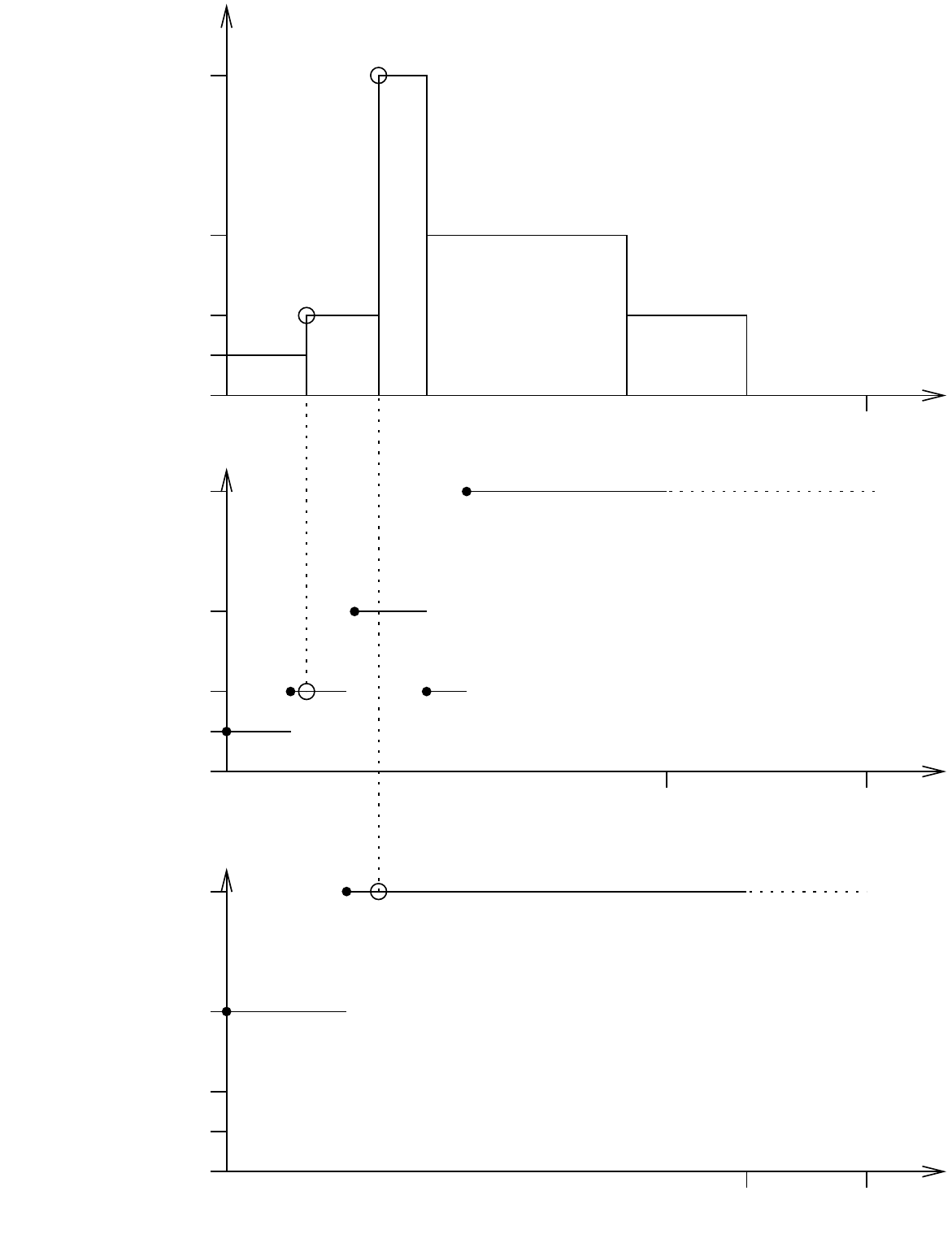_t}
}
\caption{\label{Sit} Example of scheduling with function $S_i(t)$. We have 5 tasks $T_1, \dots, T_5$, running every $D$. In this frame, $T_1$ is run at frequency $f_1 = S_1(t_1)$, $T_2$ at $f_2 = S_2(t_2)$, $T_3$ at $f_4 = S_3(t_3)$, etc}
\end{center}
\end{figure}

We need now to define the concept of schedulability in our model:
\begin{definition}
An expedient system $\{T_i, S_i(\cdot)\}, \{f_j\} (i \in \{1, \dots, N\}, j\in \{1, \dots, M\})$ is said to be \textbf{schedulable} if, whatever the combination of effective number of cycles for each task, any task $T_i$ finishes its execution no later than the end of the frame.
\end{definition}

From this definition, we can easily see that if $\{T_i\}$ is such that $\frac{1}{f_M} \sum_{i=1}^N w_i >D$ (the left hand size represents the time needed to run any task in the frame at the highest speed if every task requires its worst case execution cycle), the system will never be schedulable, whatever the set of scheduling functions. In the same way, we can see that if $\{T_i\}$ is such that $\frac{1}{f_1} \sum_{i=1}^N w_i \le D$, the system is always schedulable, even with a ``very bad'' set of scheduling functions.

Of course, a non schedulable system could be able to run its tasks completely in almost every case. Being non schedulable means that stochastically certainly (with a probability equal to 1), we will have a frame where a task will not have the time to finish before the deadline (or the end of the frame)

\section{Schedulability and Discretization}
\label{sec:sched}

\subsection{Danger Zone}

\begin{lemma}
   Any task in $\{T_i, T_{i+1}, \dots, T_N\}$ can always finish no later than $D$ if and only if the system is expedient, and $T_i$ starts no later than $z_i$, defined as
$$
z_i = D-\frac{1}{f_M} \sum_{k=i}^N w_k.
$$
\end{lemma}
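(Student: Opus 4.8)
The plan is to characterize the latest feasible completion of the suffix block $T_i, \dots, T_N$ in terms of the start time of $T_i$, and then compare that completion to $D$. Write $s_i$ for the instant at which $T_i$ begins. Because the tasks run in the fixed order $T_i, T_{i+1}, \dots, T_N$ and, by expediency, each one starts exactly when its predecessor ends, the completion times are non-decreasing, so $T_N$ is the last task to finish; hence every task of the block meets the deadline if and only if $T_N$ does. This reduces the statement to a single inequality on the finishing time of $T_N$.

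Next I would make precise the two quantifiers hidden in ``can always finish.'' The word \emph{always} ranges over all admissible cycle counts (each $T_k$ using up to $w_k$ cycles), so the binding scenario is the worst case in which every task consumes its maximum $w_k$. The word \emph{can} ranges over the admissible frequency choices, so the most favorable scenario is to run every task at the top speed $f_M$, since a higher frequency never increases the execution time of a fixed number of cycles. Combining the two, in an expedient schedule started at $s_i$ the best achievable worst-case finishing time of the block is exactly $s_i + \frac{1}{f_M}\sum_{k=i}^N w_k$.

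With this reduction the two directions are short. For sufficiency, assume the system is expedient and $s_i \le z_i$; running the whole block at $f_M$ gives a worst-case finish of $s_i + \frac{1}{f_M}\sum_{k=i}^N w_k \le z_i + \frac{1}{f_M}\sum_{k=i}^N w_k = D$, so every task of the block finishes by $D$ for every cycle combination. For necessity I would argue by contraposition: if $s_i > z_i$, then taking the worst-case cycles $w_k$ for each $k$, even the fastest admissible execution finishes no earlier than $s_i + \frac{1}{f_M}\sum_{k=i}^N w_k > D$, and any idle time forced by a non-expedient schedule only pushes this later; hence the block cannot be guaranteed to finish by $D$, contradicting the hypothesis.

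The main obstacle I expect is making the min--max step airtight rather than the arithmetic: one has to justify that no frequency assignment can beat running everything at $f_M$ on the maximal workload, i.e. that $\frac{1}{f_M}\sum_{k=i}^N w_k$ is genuinely a lower bound on the execution time of the worst-case block and not merely one particular value it can take. This is where the monotonicity of execution time in the speed (namely that $x/f$ decreases in $f$) and the fixed execution order are essential, and where the role of expediency --- guaranteeing that the span from $s_i$ to the end of $T_N$ contains no slack beyond the pure computation --- must be stated carefully.
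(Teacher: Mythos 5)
Your proof is correct. It rests on the same computational fact as the paper's --- that with worst-case cycles $w_k$ and the top frequency $f_M$, the suffix block occupies exactly $\frac{1}{f_M}\sum_{k=i}^N w_k$ units of time, which is simultaneously achievable and a lower bound --- but you organize the argument differently. The paper proceeds by backward induction from $T_N$, defining the intermediate latest-start times $z_{i+1}, z_{i+2}, \dots$ and showing at each step that starting $T_k$ by $z_k$ lets it finish by $z_{k+1}$; your version collapses the whole block into a single summation after first reducing ``every task finishes by $D$'' to ``$T_N$ finishes by $D$'' via the sequential order. Your route is shorter and makes the min--max structure of ``can always finish'' (best frequency choice against worst cycle counts) more explicit, which is a genuine clarification --- the paper leaves that quantifier alternation implicit. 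What the induction buys the paper is the family of intermediate bounds ``$T_k$ finishes no later than $z_{k+1}$,'' which is exactly what gets reused in the subsequent schedulability theorem; your direct argument establishes only the endpoint, so if you continued to the theorem you would have to re-derive those intermediate deadlines. One small caveat applies to both proofs equally: the ``only if'' direction as literally stated would require showing that a non-expedient system cannot give the guarantee, which is false in general (a system with ample slack tolerates idling); like the paper, you sensibly read the lemma as an equivalence conditioned on expediency, and your remark that forced idle time only delays completion is the right thing to say there.
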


\begin{proof}
This lemma can be proved by induction.
~\\
\textbf{Initialization.}~~We first consider the case $T_N$. The very last time the task $T_N$ can start is the time allowing it to end before $D$ even if it consumes its $w_N$ cycles. At the highest frequency $f_M$, $T_N$ takes at most $\dfrac{w_N}{f_M}$ to finish. $T_N$ has then necessarily to start no later than $D-\dfrac{w_N}{f_M}$. Otherwise, if the task starts after that time, even at the highest frequency, there is no certitude that $T_N$ will finish by $D$.

~\\
\textbf{Induction.}~~We know that if (and only if) $T_{i+1}$ starts no later than $z_{i+1}$, the schedulability of $\{T_{i+1}, \dots, T_N\}$ is ensured. We need then to show that if $T_i$ starts no later than $z_i$, it will be finished by $z_{i+1}$. If $T_i$ starts no later that $z_i$, we can choose the frequency in order that $T_i$ finishes before
$$
z_i + \frac{w_i}{f_M} = D-\frac{1}{f_M} \sum_{k=i}^N w_k + \frac{w_i}{f_M} = z_{i+1}.
$$
\end{proof}

\begin{definition}
The \textbf{danger zone} of $T_i$ is the range $] z_i, D]$.
\end{definition}
This danger zone means that if $T_i$ has to start in $]z_i, D]$, we cannot guarantee the schedulability anymore. Even if, because of the variable nature of execution time, we cannot guarantee that some task will miss its deadline.
Of course, the size of the danger zone of $T_i$ is larger that the one of $T_j$ if $i<j$, which means that $z_i<z_j$ iff $i<j$.

In order to simplify some notation, we will state $z_{N+1}=D$.

\subsection{Schedulability Conditions}
Let us now consider conditions on $\{S_i\}$ allowing to guarantee the schedulability of the system. We prove the following theorem:

\begin{theorem}
 $$S_i(t) \ge \dfrac{w_i}{z_{i+1}-t} ~\forall i \in [1, \dots, N], t \in [0, z_i[,$$
 where
 $$z_i = D-\frac{1}{f_M}\sum_{k=i}^N w_k,$$
 is a necessary and sufficient condition in order to guarantee that if task $T_i$ does never require more than $w_i$ cycles and the system is expedient, any task $T_i$ can finish no later than $z_{i+1}$, and then the last one $T_N$ no later than $D$.
\end{theorem}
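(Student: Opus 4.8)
The plan is to reduce the whole theorem to a single pointwise observation and then propagate it along the task sequence using expediency. The key observation is purely algebraic: if $T_i$ starts at time $t$ and runs at the fixed frequency $S_i(t)$, then, since it consumes at most $w_i$ cycles, its completion time is at most $t + w_i/S_i(t)$. Because $t < z_i$ forces $z_{i+1}-t > z_{i+1}-z_i = w_i/f_M > 0$, multiplying through by this positive quantity shows that $t + w_i/S_i(t) \le z_{i+1}$ is equivalent to $S_i(t) \ge w_i/(z_{i+1}-t)$. In other words, the stated inequality at a point $t$ is exactly the assertion ``if $T_i$ starts at $t$, it finishes no later than $z_{i+1}$, even in its worst case.'' I would establish this equivalence first, making the worst-case reduction (it suffices to treat the actual cycle count as equal to $w_i$) explicit.

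For sufficiency I would argue by induction on $i$ that, in every execution, $T_i$ starts no later than $z_i$. The base case is immediate: expediency forces $T_1$ to start at time $0 \le z_1$, the inequality $z_1 \ge 0$ being precisely the global feasibility condition $\frac{1}{f_M}\sum_k w_k \le D$ already noted in the text. For the inductive step, assuming $T_i$ starts at some $s_i \le z_i$, the pointwise equivalence gives that $T_i$ completes no later than $z_{i+1}$; expediency then makes $T_{i+1}$ start exactly at that completion time, hence no later than $z_{i+1}$, closing the induction. Carrying this up to $i=N$ and using the convention $z_{N+1}=D$ yields that $T_N$ finishes by $D$.

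For necessity I would use the contrapositive. Suppose the condition fails at some pair $(i,t^*)$ with $t^* \in [0,z_i[$, i.e.\ $S_i(t^*) < w_i/(z_{i+1}-t^*)$. I would then exhibit a combination of effective cycle counts that drives $T_i$ to start exactly at $t^*$ and to request its full $w_i$ cycles; by the equivalence, $T_i$ completes strictly after $z_{i+1}$, so the guaranteed property fails. Realizability of the start time $t^*$ is what must be checked: I would choose the actual cycle counts of $T_1,\dots,T_{i-1}$ so that their cumulative execution terminates precisely at $t^*$, which is feasible because, as the earlier tasks range between light loads and their worst cases, the induced start time of $T_i$ sweeps an interval whose upper end is $z_i$.

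The main obstacle, and the step I expect to require the most care, is exactly this realizability argument together with the behaviour at the boundary $t = z_i$. The start times of $T_i$ are fixed by discrete frequency choices and by the cycle distributions of the preceding tasks, so not every $t^* \in [0,z_i[$ is obviously attainable; the extreme degenerate case is $i=1$, whose start time is always $0$. I would resolve this by establishing that the attainable start times of $T_i$ form an interval terminating at $z_i$, and by treating the closed endpoint through the limit $w_i/(z_{i+1}-t) \to f_M$ as $t \to z_i^-$: the condition then forces $S_i = f_M$ on a left neighbourhood of $z_i$, which covers the start-exactly-at-$z_i$ scenario inherited from the preceding lemma.
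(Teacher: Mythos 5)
Your sufficiency argument is essentially the paper's own proof: the same reduction of the worst case to ``$T_i$ consumes exactly $w_i$ cycles,'' the same pointwise algebraic step turning $S_i(t)\ge w_i/(z_{i+1}-t)$ into ``completion by $z_{i+1}$,'' and the same induction propagating the start bound $s_i\le z_i$ via expediency (the paper phrases the feasibility of the required frequency as the check $w_i/(z_{i+1}-t)\le f_M$ on $[0,z_i]$, which is your boundary remark in different clothing). No issues there.

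The divergence is in the necessity direction, and it cuts both ways. You are right to identify realizability of the start time $t^*$ as the crux --- the paper dispenses with necessity in a single sentence (``symmetrically, we can show\dots'') that silently assumes $T_i$ can actually start at any $t<z_i$, so your instinct is sounder than the source. But the lemma you propose to close the gap --- that the attainable start times of $T_i$ ``sweep an interval whose upper end is $z_i$'' --- is false in this model. Cycle counts are integers and each preceding task runs at one of finitely many frequencies, so the set of attainable start times of $T_i$ is finite, not an interval; moreover the paper's own Figure~2 shows the start time is not a monotone function of the preceding cycle counts, so there is no continuous ``sweep,'' and its supremum need not be $z_i$ when earlier $S_j$ sit strictly above their limits. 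Your own degenerate case $i=1$ already witnesses that strict pointwise necessity fails for $t>0$. So as written your necessity proof has a genuine gap; the honest resolutions are either to weaken the claim (the condition is necessary only at attainable start times, or necessary for the guarantee to hold uniformly over all cycle distributions with worst cases $w_i$, up to quantization at the cycle level) or to accept the paper's implicit idealization that $T_i$ may start anywhere in $[0,z_i]$. Either way, state which one you are doing.
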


\begin{proof}
We show this by induction. Let $\tau_i$ be the worst finishing time of task $T_i$. Please note that this does not necessarily correspond to the case where any task before $T_i$ consumes its WCEC. Figure \ref{WC} highlights why.

\begin{figure}[!ht]
\begin{center}
\scalebox{0.80}{\input{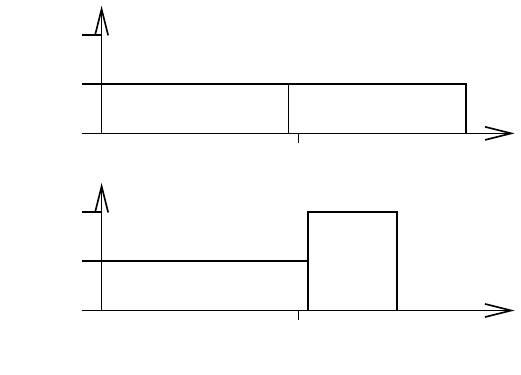_t}}
\caption{\label{WC} Example showing that a shorter number of cycles for one task can result in a worse ending time for subsequent tasks. Here, $t'$ is the point at which $S_2(t)$ goes from $f_1$ to $f_2$. On the top plot, $T_1$ uses slightly less cycles than in the bottom plot, and $T_2$ uses the same number in both cases, but is run at $f_1$ in the first case, and at $f_2$ in the second one.}
\end{center}
\end{figure}

First, we have to show that in the range $[0, z_i]$, $\dfrac{w_i}{z_{i+1}-t}\le f_M$. As this function is an increasing function of $t$, we just need to consider the maximal value we need:
\begin{eqnarray*}
\dfrac{w_i}{z_{i+1}-z_i} &=& \dfrac{w_i}{D-\dfrac{1}{f_M}\sum\limits_{k=i+1}^N w_k - \left(D-\dfrac{1}{f_M}\sum\limits_{k=i}^N w_k\right)}\\
	&=& \dfrac{w_i}{\frac{1}{f_M} w_i} = f_M\\
\end{eqnarray*}

~\\
\textbf{Initialization.}~~For the initialization, we consider $T_1$. Clearly, as the execution length is not taken into account for the frequency selection, the worst case occurs when $T_1$ uses $w_1$ cycles. As $T_1$ starts at time $0$, we have
$$
\tau_1 = \frac{w_1}{S_1(0)}.
$$
As $S_1(t)\ge \dfrac{w_1}{z_2-t}$ by hypothesis, we have

$$
\tau_1 \le \dfrac{w_1}{\frac{w_1}{z_2}} = z_2.
$$

$T_1$ ends then no later than $z_2$ in any case. Similarly, we have that if $S_1(t) < \dfrac{w_1}{z_2-t}$, $\tau_1 >z_2$, and we cannot guarantee that $T_1$ finishes no later than $z_2$

~\\
\textbf{Induction.}~~Let us now consider $T_i$, with $i>1$. We know by induction that $T_{i-1}$ finished its execution between time $0$ and time $z_i$. Let $\theta$ be this end time. Knowing that task $T_i$ starts at $\theta$, the worst case for $T_i$ is to use $w_i$ cycles. The worst end time of $T_i$ is then

$$
\tau_i = \theta + \frac{w_i}{S_i(\theta)}
$$
with $\theta \in [0, \tau_{i-1}=z_i]$.

Then, as $S_i(t) \ge \dfrac{w_i}{z_{i+1}-t}$ (which is possible, because we have just shown that the right hand side is not higher than $f_M$ in the range we have to consider), we have
$$
\tau_i = \theta + \frac{w_i}{S_i(\theta)} \le \theta + \frac{w_i}{\frac{w_i}{z_{i+1}-\theta}} =\theta + z_{i+1}-\theta  = z_{i+1}.
$$

We then have that if $S_i(t) \ge \dfrac{w_i}{z_{i+1}-t}$, task $T_i$ finishes always no later than $z_{i+1}$, and then, as a consequence, that any task finishes no later than $z_{N+1}=D$.

Symmetrically, we can show also that if $S_i(t) < \dfrac{w_i}{z_{i+1}-t}$, then $\tau_i$ is higher than $z_{i+1}$, and then $\tau_N$ is higher than $D$, and the system is not schedulable.
\end{proof}

Remark that the expedience hypothesis is a little bit too strong. It would be enough to require that $T_i$ never waits intentionally later than $z_i$. $T_1$ doesn't even have to start at time $0$, as soon as it starts no later that $z_1$. With this hypothesis, the initialization would be: in the worst case, $T_1$ would start at time $\theta$, somewhere between $0$ and $z_1$ and use $w_1$ cycles. In this case, it would end at
$$
\tau_1 = \theta + \frac{w_1}{S_1(\theta)} \le \theta+ \frac{w_1}{\frac{w_1}{z_2-\theta}} = z_2
$$
and we know that the CPU can be set to the speed $\frac{w_1}{z_2-\theta}$, which is not higher than $f_M$ because $\theta$ is in $[0, z_1]$.

\begin{definition}
   We denote by $\mathcal{L}_i(t)$ the schedulability limit, or
   $$\mathcal{L}_i(t) = \dfrac{w_i}{z_{i+1}-t}$$
 where
 $$z_i = D-\frac{1}{f_M}\sum_{k=i}^N w_k.$$
\end{definition}

An example of such schedulability limits is given in Figure~\ref{fig:lim}, with four tasks, and a maximum frequency of 1000MHz.

\begin{figure}[ht]
\begin{center}
\includegraphics[width=0.75\linewidth]{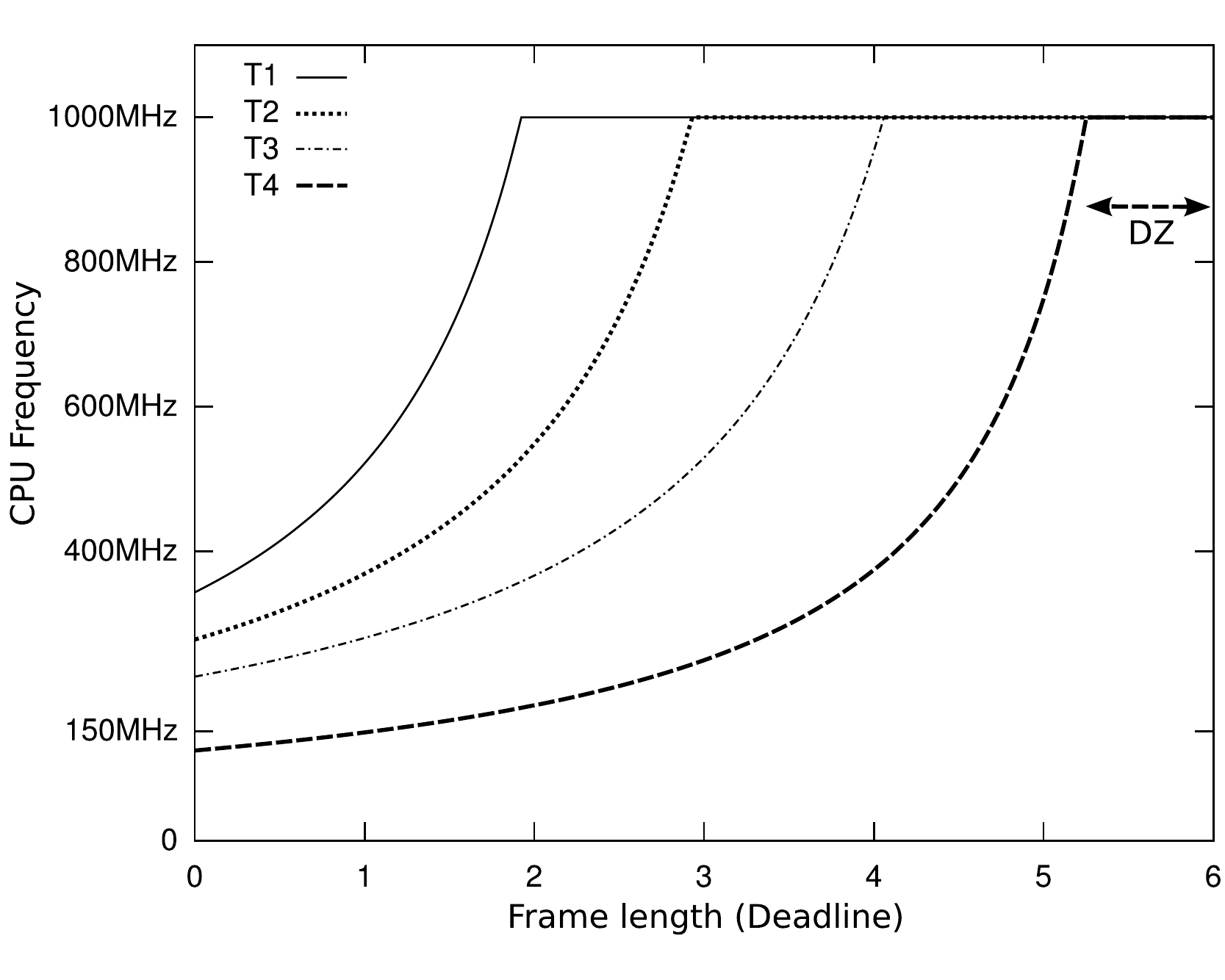}
\caption{\label{fig:lim}Set of limit functions $\mathcal{L}_i(t)$, for an example of 4 tasks. DZ represents the Danger Zone of $T_4$.}
\end{center}
\end{figure}

\subsection{Discrete Limit}
The closest scheduling functions set to the limit is
$$
S_i(t) = \min\left\{f \in \{f_1, \dots, f_N\} :  f \ge \mathcal{L}_i(t) \right\}. %~\forall i \in [1, N], t\in[0, z_i[.
$$
Informally, we could write this function $S_i(t) = \left\lceil\dfrac{w_i}{z_{i+1} - t}\right\rceil$, where $\lceil w \rceil$ stands for ``\textit{the smallest available frequency not lower than} $x$''.
This function varies as a discrete hyperbola between $\left\lceil\dfrac{w_i}{z_{i+1}}\right\rceil$ and %$\left\lceil\dfrac{w_i}{z_{i+1} - z_i}\right\rceil $
%$$= \left\lceil\dfrac{w_i}{D-\frac{1}{f_M}\sum_{k=i+1}^N w_k -D+\frac{1}{f_M}\sum_{k=i}^N w_k }\right\rceil$$
$$\left\lceil\dfrac{w_i}{z_{i+1} - z_i}\right\rceil=\left\lceil\dfrac{w_i}{\frac{w_i}{f_M}}\right\rceil = \left\lceil f_M\right\rceil = f_M.$$

This function is however in general not very efficient: $T_1$ is run at the slowest frequency allowing to still run the following jobs in the remaining time. But then, $T_1$ is run very slowly, while $\{T_2, \dots, T_N\}$ have a pretty high probability to run at a high frequency. A more balanced frequency usage is often better.

This strategies actually corresponds to the Greedy technique (DPM-G) described by Moss\'{e} et al. \cite{Mosse00}, except that they consider continuous speeds.

Building such a function is very easy, and is in $\mathcal{O}(M)$ for each task, with the method given by Algorithm~\ref{alg:limit}. We mainly need to be able to inverse $\mathcal{L}$: $\mathcal{L}^{-1}_i(f) = z_{i+1} - \frac{w_i}{f}$.

\begin{algorithm}[ht]
\caption{\label{alg:limit} Building {\sffamily Limit}, worst case scheduling functions. $(a)^+$ means $\max\{0, a\}$.}
$z\gets D$\;

\ForEach{$i \in \{N, \dots, 1\}$}{
	$S_i \adds (0, f_1)$\;
	
	\ForEach{$j \in \{2, \dots, M\}$}{
		$S_i \adds \big( \left(z - \dfrac{w_i}{f_{j-1}}\right)^+ , f_j\big)$\;
	}
	$z \gets z - \frac{w_i}{f_M}$\;
}
\end{algorithm}

In the following, this strategy is named as {\sffamily Limit}.

\subsection{Checking the schedulability}

Provided a set of scheduling functions $\{S\}$, checking its schedulability is pretty simple. As we know that the limit function is non decreasing, we just need to check that each step of $S_i$ is above the limit. This can be done with the following algorithm.

\begin{algorithm}[ht]
\caption{\label{alg:check}Schedulability check}
$z\gets D$\;

\ForEach{$i \in \{N, \dots, 1\}$}{
	
	\ForEach{$k \in \{2, \dots, \mid S_i \mid\}$}{
		\If{$S_i[k-1].f < \dfrac{w_i}{z - S_i[k].t}$}{
			\Return \KwSty{false}\;
		}
	}
	$z \gets z - \frac{w_i}{f_M}$\;
}
\Return \KwSty{true}\;
\end{algorithm}

This check can then be performed in $\mathcal{O}\left(\sum_{i=1}^N \mid S_i \mid\right)$ which, is $S_i$ is non decreasing (which is almost always the case) is lower than $\mathcal{O}(N\times M)$.

This test can be used offline to check the schedulability of some method or heuristic, but can also be performed as soon as some parameter change has been detected. For instance, if the system observes that a task $T_i$ used more cycles than its (expected) WCEC $w_i$, the test could be performed with the new WCEC in order to see if the current set of $S$ functions can still be used. Notice that we only need to check tasks between $1$ and $i$, because the schedulability of tasks in $\{i+1, \dots, N\}$ does not depend upon $w_i$.
See Section~\ref{sec:ccl} about future work for more details.

\subsection{Using Schedulability Condition to Discretize Continuous Methods}
\label{sec:discret}
\begin{figure}[ht]
\begin{center}
\includegraphics[width=0.80\linewidth]{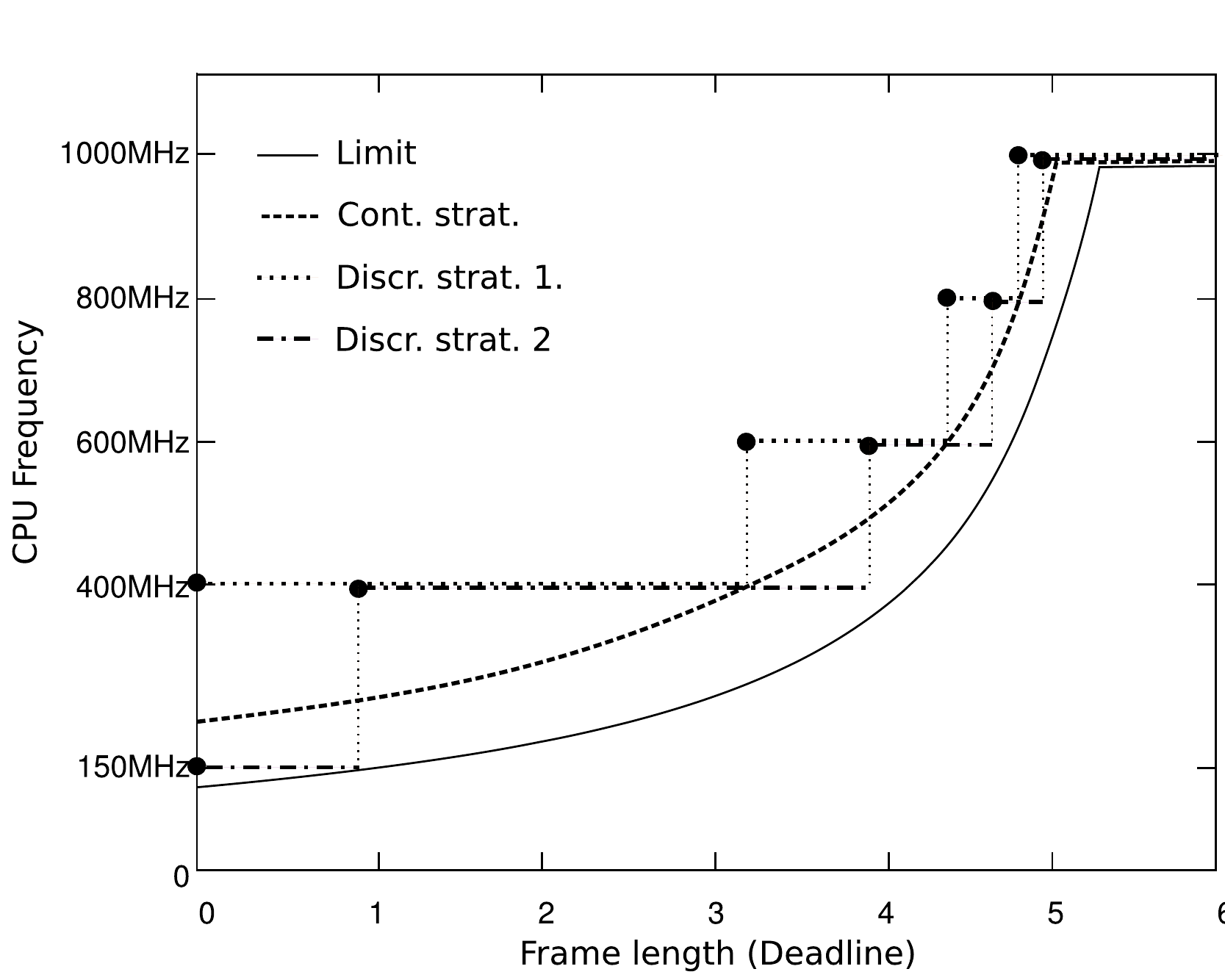}
\end{center}
\caption{Two different ways of discretizing a continuous strategy: {\sffamily Discr.\;strat.\;1} rounds up to the first available frequency. {\sffamily Discr.\;strat.\;2} (our proposal) uses the closest available frequency, taking the limit into account. {\sffamily Limit} is the strategy described by Algorithm~\ref{alg:limit}.}
\end{figure}

There are mainly two ways of building a set of $S$-functions for a given system. The first method consists in considering the problem with continuous available frequencies, and by some heuristic, adapting this result for a discrete speeds system. The second method consists in taking into account from the beginning that there are only a limited number of available speeds.
The second family of methods has the advantage of being usually more efficient in terms of energy, but the disadvantage of being much more complex, requiring a non negligible amount of computations or memory. This is not problematic if the system is very stable and its parameters do not change often, but as soon as some on-line adaptation is eventually required, heavy and complex computations cannot be performed anymore.

In the first family, the heuristic usually used consists in computing a continuous function $\mathcal{S}^c_i(t)$ which is build in order to be schedulable, and to obtain a discrete function by using for any $t$ the smallest frequency above $\mathcal{S}^c_i(t)$, or $S_i(t) = \lceil \mathcal{S}^c_i(t)\rceil$. However, this strategy is often pessimistic. But so far, there were no other method in order to ensure the schedulability. This assertion is not valid anymore, because we provided in this paper a schedulability condition which can be used.

The main idea is, instead of using the smallest frequency above $\mathcal{S}^c_i(t)$, to use the closest frequency to $\mathcal{S}^c_i(t)$, and, if needed, to round this up with the schedulability limit $\mathcal{L}_i(t)$. In other words, we will use:
$$
S_i(t) = \max \{ \lceil \mathcal{S}^c_i(t) \rfloor, \lceil \mathcal{L}_i(t) \rceil \}.
$$
The advantage of this technique is that we have more chance to be closer to the continuous function (which is often optimal in the case of continuous CPU). However, both techniques (ceiling and closest frequency) are approximations, and none of them is guaranteed to be better than the other one in any case. As we will show in the experimental section, there are systems in which the classical discretization is better, but there are also many cases where our discretization is better.

Algorithm~\ref{alg:closest} shows how step functions can be obtained. For each task, computing its function is in $\mathcal{O}(M \times A)$, where $A$ is the complexity of computing $\mathcal{S}^{-1}_i(f)$. According to the kind of continuous method we use, $A$ can range between $1$ (if ${\mathcal{S}^c}^{-1}_i(f)$ has a constant closed form) and $\log(D/\varepsilon) \times B$, with a binary search, where $\varepsilon$ is the desired precision, and $B$ the complexity of computing $\mathcal{S}^c_i(t)$.
\begin{algorithm}[ht]
\caption{\label{alg:closest}Algorithm computing the closest stepfunction to $\mathcal{S}^c_i(\cdot)$, respecting the schedulability limit $\mathcal{L}_i(\cdot)$. $(a)^+$ stands for $\max\{0, a\} $.}
\ForEach{$i \in \{N, \dots, 1\}$}{
	$S_i \adds (0, f_1)$\;
	
	\ForEach{$j \in \{2, \dots, M\}$}{
		$f \gets (f_{j-1}+f_j)/2$\; \\
		$t \gets \KwSty{min} \{ {\mathcal{S}^c}^{-1}_i(f), \mathcal{L}^{-1}_i(f_{j-1})\}$\; \\
		%$S_i \adds (\KwSty{max} \{0, \KwSty{min} \{ \mathcal{S}^{-1}_i(f), \mathcal{L}^{-1}_i(f_j)  \} \}, f_j)$\;
		$S_i \adds ((t)^+, f_j)$\;
	}
}
\end{algorithm}

Actually, computing the closest frequency amongst $\{f_1, f_2, \dots, f_M\}$ roughly boils down to compute the round up frequency amongst the set $\{\frac{f_1+f_2}{2}, \frac{f_2+f_3}{2}, \dots, \frac{f_{M-1}+f_M}{2}\}$. Then, the range corresponding to $\frac{f_1+f_2}{2}$ is mapped onto $f_2$, etc.
In Algorithm~\ref{alg:closest}, if we simply use $f_{j-1}$ instead of $f$, we obtain the classical round up operation.

\section{Model Extensions}
\label{sec:ext}
\subsection{Frequency Changes Overhead}
\label{sec:overhead}
Our model allows to easily take the time penalty of frequency changes into account. Let $P_T(f_i, f_j)$ be the time penalty of changing from $f_i$ to $f_j$. This means that once the frequency change is asked (usually, a special register has been set to some predefined value), the processor is ``idle'' during $P_T(f_i, f_j)$ units of time before the next instruction is run. We assume that the worst time overhead is when the CPU goes from $f_1$ to $f_M$. We denote for this $P_T^M = \max_{i,j} P_T(f_i, f_j) = P_T(f_1, f_M)$.

Notice that this model is rather pessimistic: on modern DVS CPUs, the processor does not stop after a change request, but still run at the old frequency for a few cycles before the change becomes effective. However, even if the processor never stops, there is still a penalty, but the time penalty is negative when the speed goes down (because the job will be finished sooner than if the frequency change had been performed before it started). Then as a first approximation, we could consider that negative penalties compensate positive penalties. But this approximation does not hold for energy penalties, because all of them are obviously positive.

We want also to take the switching time before jobs into account, even if there is no frequency change (we assume that the job switching time is already taken into account in $P_T$). Let $S_T(f_i)$ be the switching time when the frequency is $f_i$, and is not changed between two consecutive jobs. Again, let $S_T^M$ denote $S_T(f_M)$. Usually, we have $S_T(f_i)<S_T(f_j)$ if $f_i>f_j$. We made here the simplifying hypothesis that the switching time is job independent, which is an approximation since this time usually depends upon the amount of used memory. However, in our purpose, we only need to consider an upperbound of this time.

As before, we know that $T_N$ must start no later than $D-\frac{w_N}{f_M}$. If $T_N$ starts at this limit (and even before), the selected frequency must be $f_M$. Then we could have two situations:
\begin{itemize}
   \item Best case: the previous tasks $T_{N-1}$ was already running at $f_M$. Then $T_{N-1}$ needs to finish before the start limit for $T_N$, minus the switching time, then $D-\frac{w_N}{f_M} - S_T^M$;
   \item Worst case: the previous tasks $T_{N-1}$ was not running at $f_M$, we need then to change the frequency. In the worst case, the time penalty will be $P_T^M$. $T_{N-1}$ needs then to finish no later than $D-\frac{w_N}{f_M} - P_T^M$.
\end{itemize}

The first limit is then a necessary condition, and the second, a sufficient condition to ensure the schedulability of $T_N$. Similarly, we can see that $T_i$ must start before $z_i^n$ to ensure the schedulability of itself and any subsequent task (necessary condition), and this schedulability is ensured (sufficient condition) if $T_i$ starts before $z_i^s$, where $z_i^n$ and $z_i^s$ are defined as:
$$
z_i^n = D - \frac{1}{f_M} \sum_{k=i}^N w_k - (N-i+1) S_T^M = z_i - (N-i+1) S_T^M
$$
and
$$
z_i^s = D - \frac{1}{f_M} \sum_{k=i}^N w_k - (N-i+1) P_T^M = z_i - (N-i+1) P_T^M
$$
We can then provide two schedulability conditions:
\begin{itemize}
   \item Necessary condition: $S_i(t) \ge \dfrac{w_i}{z_{i+1}^n - t}$;
   \item Sufficient condition: $S_i(t) \ge \frac{w_i}{z_{i+1}^s - t}$.
\end{itemize}

Algorithm~\ref{alg:closest} can easily be adapted using those conditions. We use then $\mathcal{L}_i(t) = \dfrac{w_i}{z_{i+1}^s - t}$.

\subsection{Soft Deadlines}
If we want to be a little bit more flexible, we could possibly consider soft deadlines, and adapt our schedulability condition consequently. The main idea is to not consider the WCEC, but to use some percentile: if $\kappa_i(\varepsilon)$ is such that $\mathbb{P}[c_i<\kappa_i(\varepsilon)] \ge 1-\varepsilon$, where $c_i$ is the actual number of cycles of $T_i$, we can use $\kappa_i(\varepsilon)$ as a worst case execution time.

However, it seems to be almost impossible to compute analytically the probability of missing a deadline with this model. It would boil down to compute $\mathbb{P}[E_1+E_2+E_3+...+E_N]$ where $E_i$ represents the execution time of jobs of task $T_i$. $E_i$ depends then upon the job length distribution, but also upon the speed at which $T_i$ is run, which depends upon the time at which $T_{i-1}$ ends ... which depends upon the time $T_{i-2}$ ended, and so on.
As $E_i$'s are not independent, it seems then that we cannot use the central limit theorem.

If we accept an approximation of the failure probability, we could do in the following way. Let $C_i$ be the random variable giving the number of cycles of $T_i$, and $\mathbb{C} = \sum_i C_i$. Let $\mathbb{W} = \sum_i w_i$ be the maximal value of $\mathbb{C}$ (the frame worst case execution cycle). Let $\mathbb{C}^\varepsilon=\min_c \{ \mathbb{P}[\mathbb{C}<c]>1-\varepsilon\}$.

We assume that using the deadline $D \dfrac{\mathbb{W}}{\mathbb{C}^\varepsilon}$ will allow to respect deadlines with a probability close to $1-\varepsilon$. Those propositions are only heuristics, and should require more work, both analytic and experimental.

\section{Experimental Results}

\newcommand{\figwidth}{0.3\linewidth}
\label{sec:experim}
\begin{figure*}[h!t!]
\begin{center}
\includegraphics[width=\figwidth]{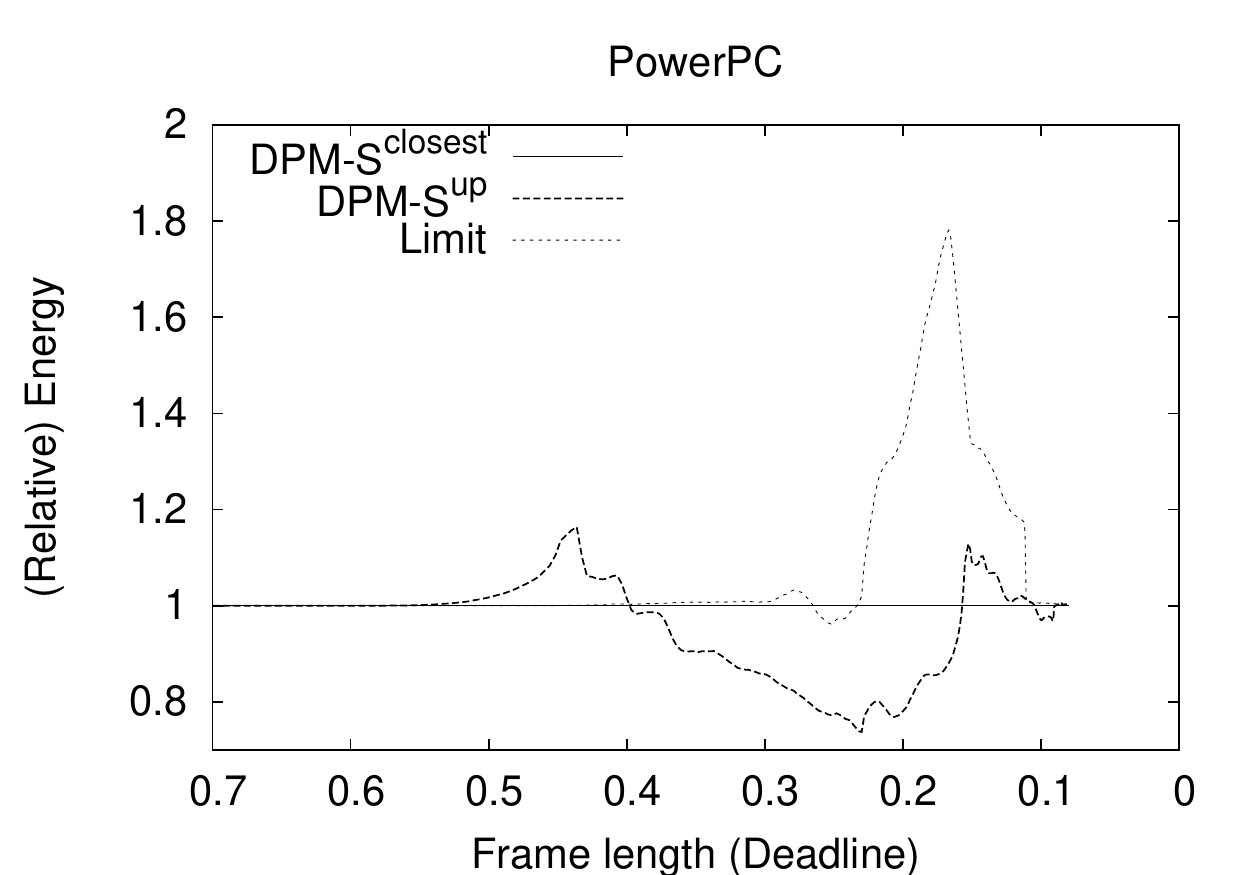}
\includegraphics[width=\figwidth]{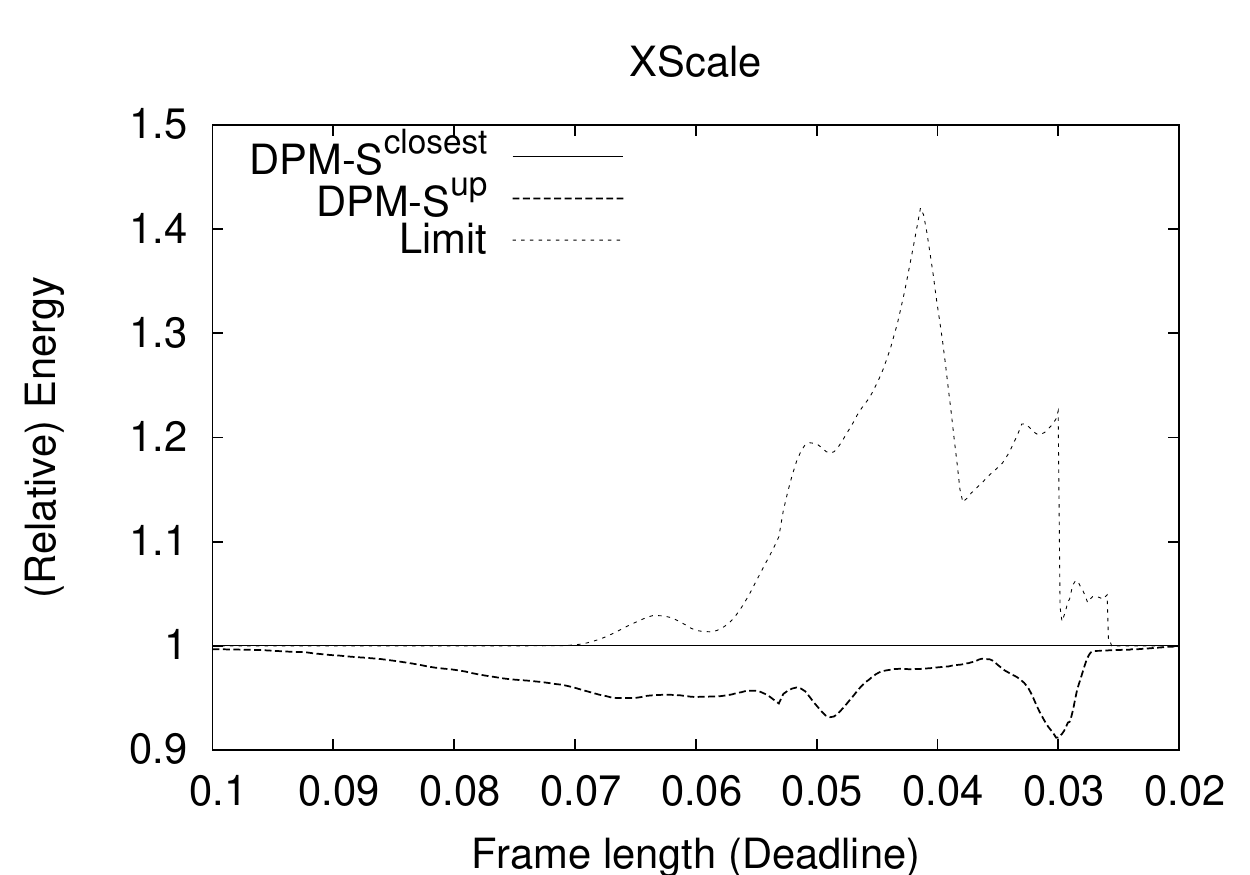}
\includegraphics[width=\figwidth]{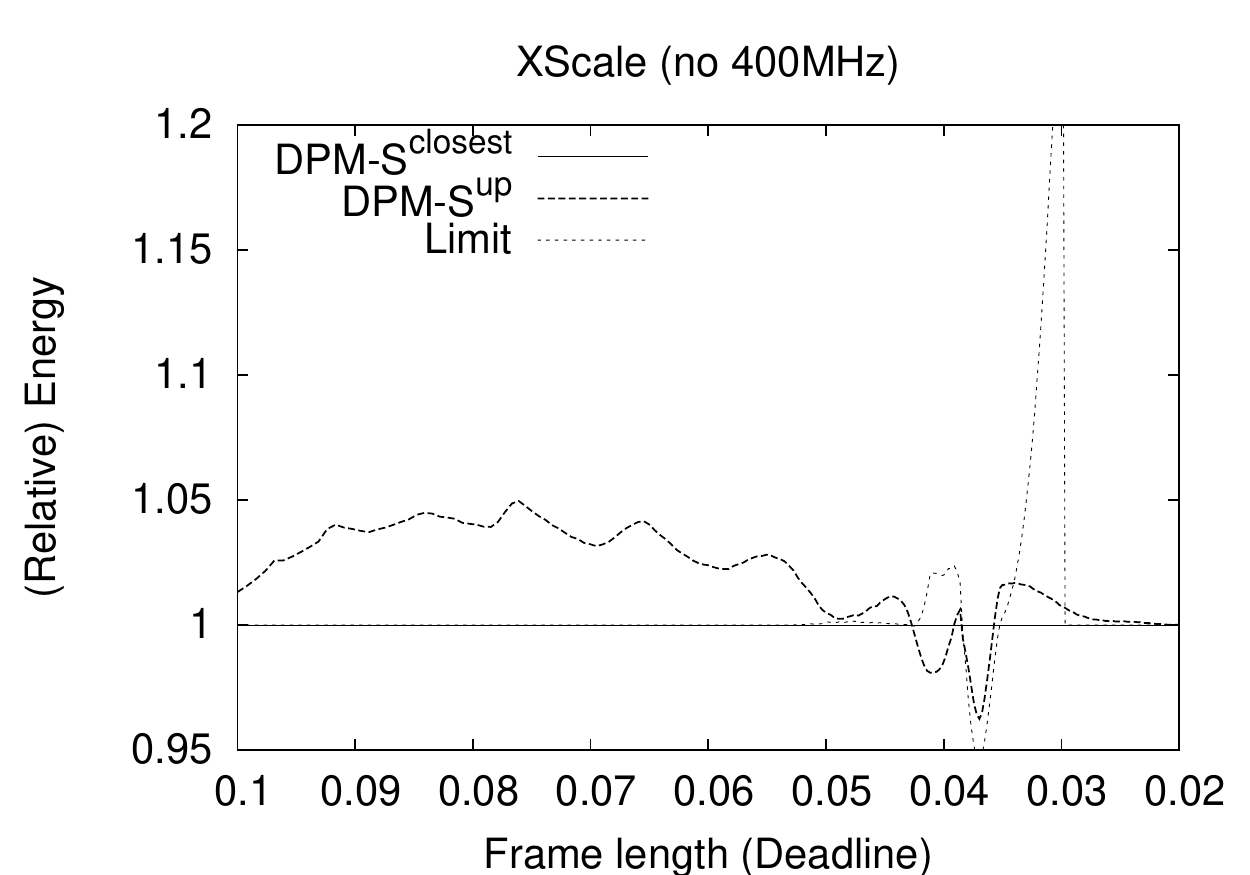}
\end{center}
\caption{\label{fig:Unif-DPMS}Energy consumption relative to {\sffamily DPM-S$^{\text{closest}}$}, for a set of 12 tasks with uniformly distribution.}
\end{figure*}

\begin{figure*}[h!t!]
\begin{center}
\includegraphics[width=\figwidth]{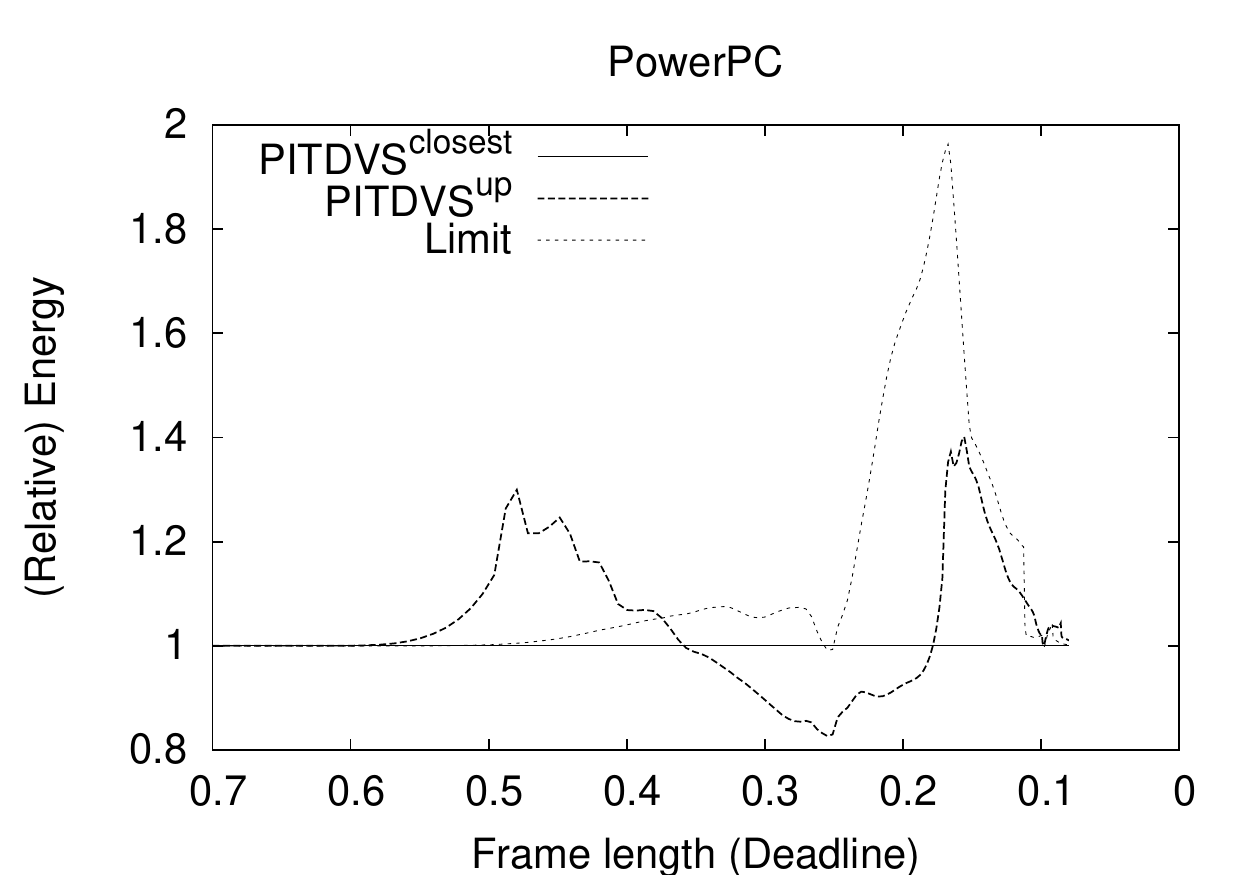}
\includegraphics[width=\figwidth]{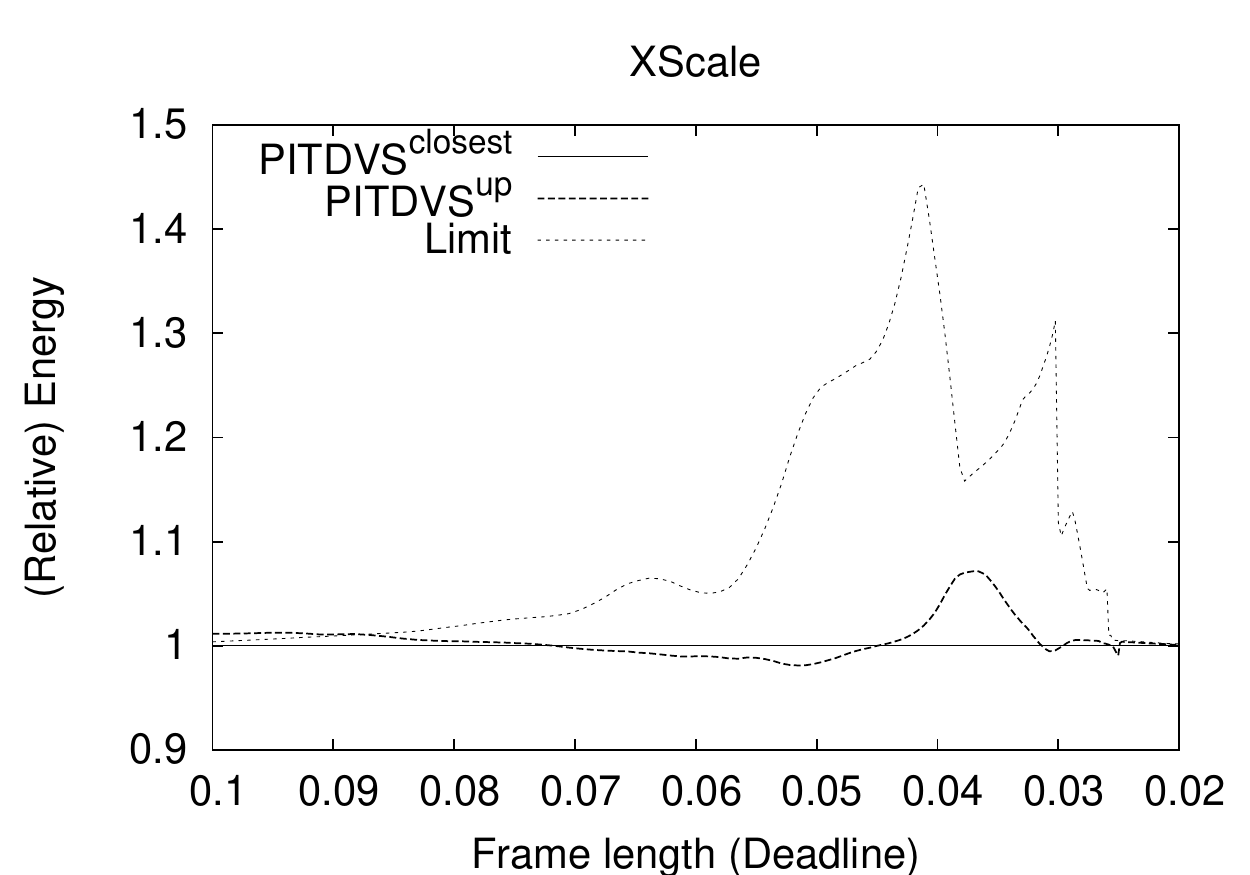}
\includegraphics[width=\figwidth]{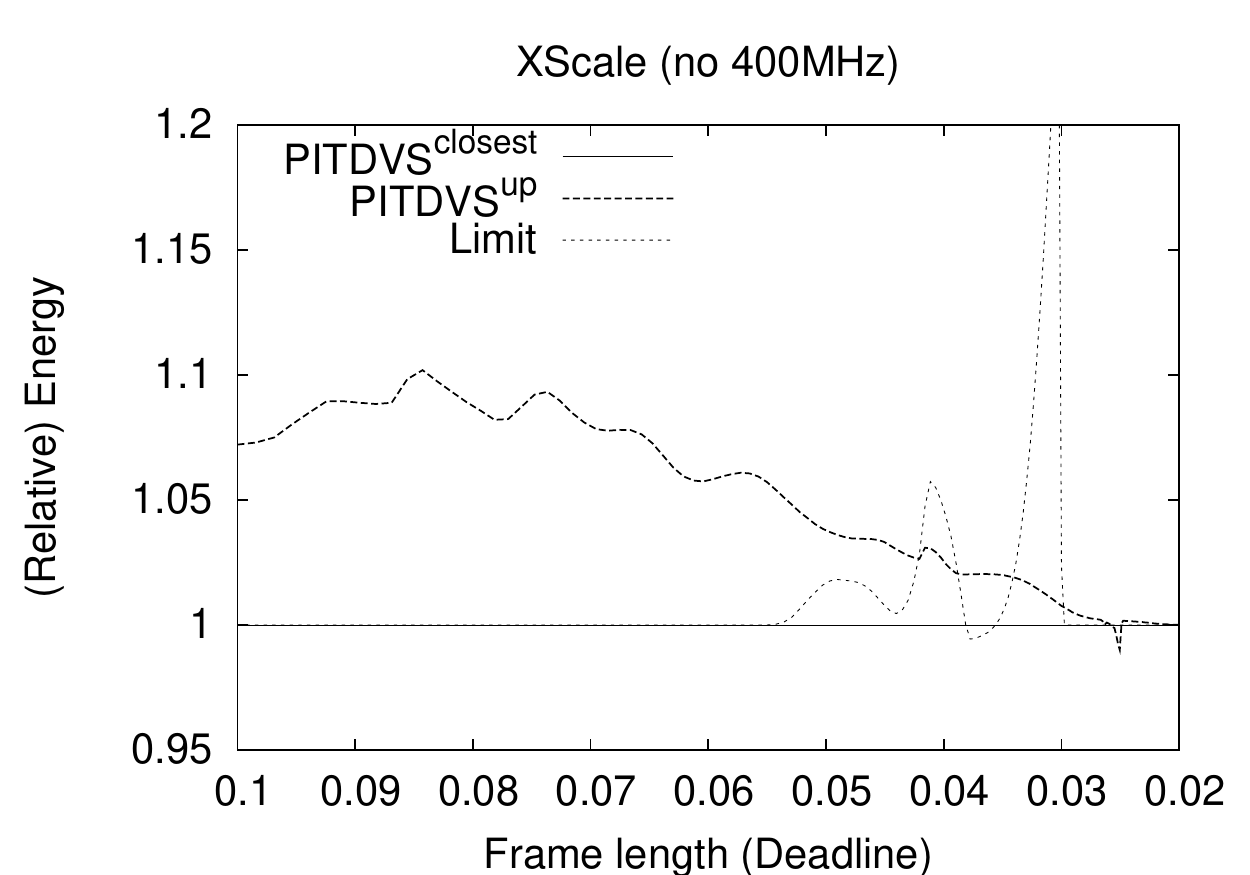}
\end{center}
\caption{\label{fig:Unif}Energy consumption relative to {\sffamily PITDVS$^{\text{closest}}$}, for a set of 12 tasks with uniformly distribution.}
\end{figure*}

\begin{figure*}[ht]
\begin{center}
\includegraphics[width=\figwidth]{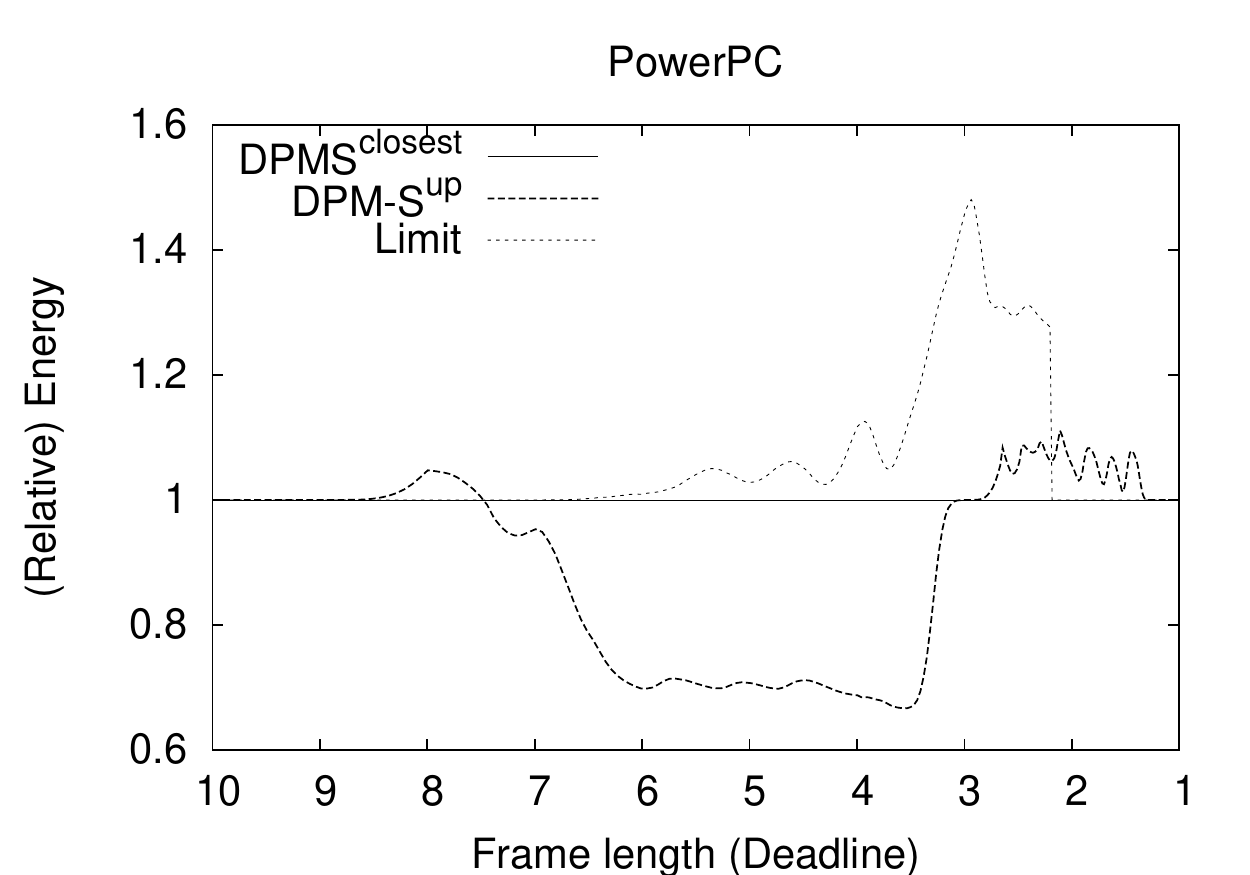}
\includegraphics[width=\figwidth]{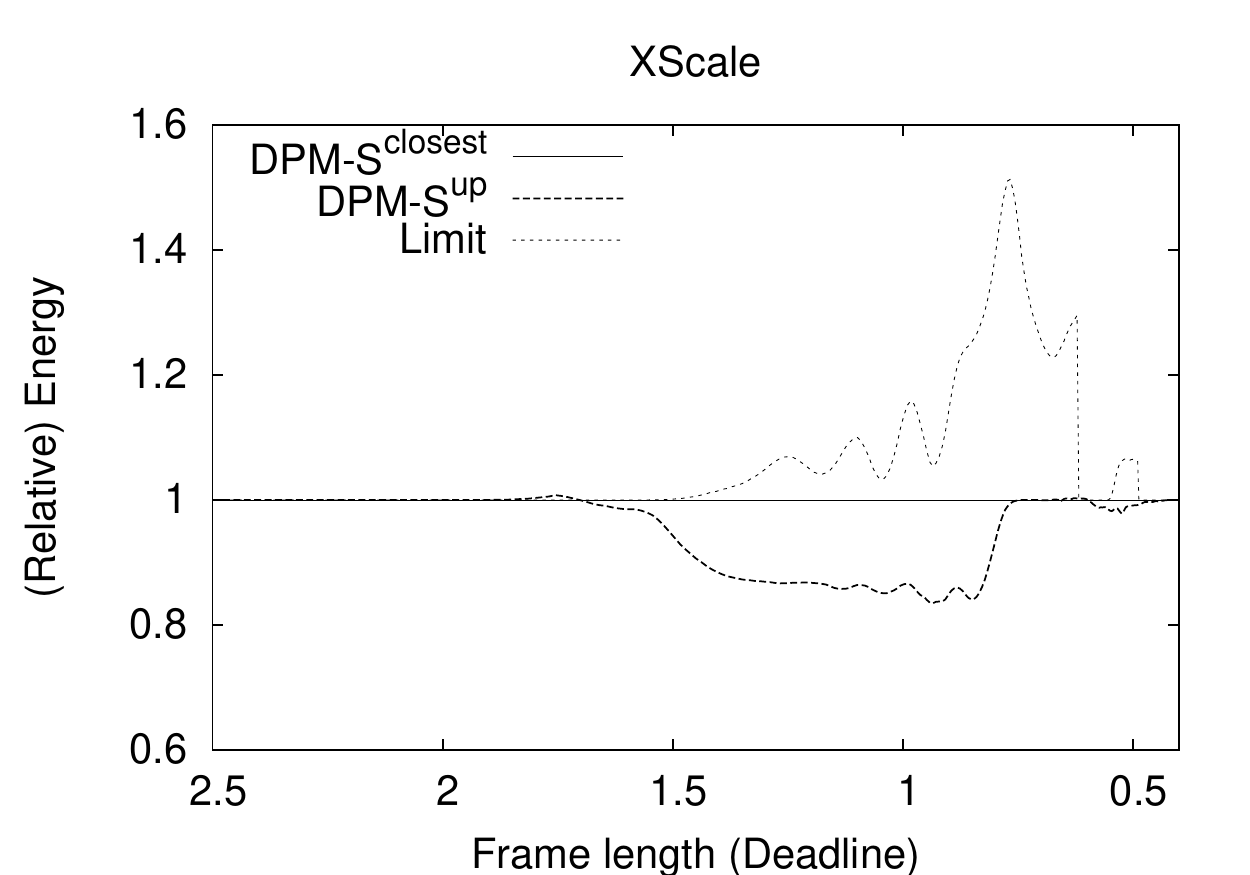}
\includegraphics[width=\figwidth]{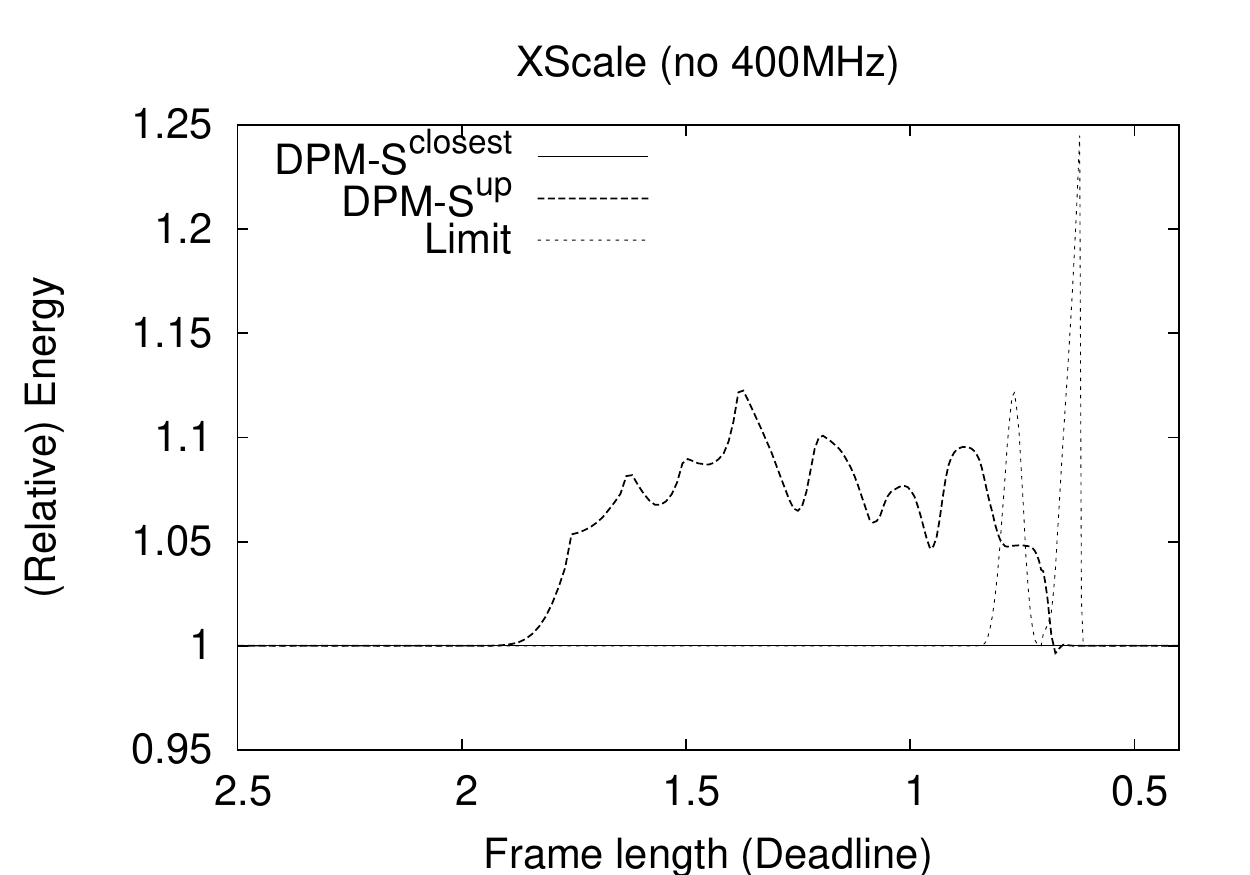}
\end{center}
\caption{\label{fig:Div-DPMS}Energy consumption relative to {\sffamily DPM-S$^{\text{closest}}$}, for a set of 8 tasks distributed as shown in Figure \ref{fig:DivDistr}.}
\end{figure*}

\begin{figure*}[ht]
\begin{center}
\includegraphics[width=\figwidth]{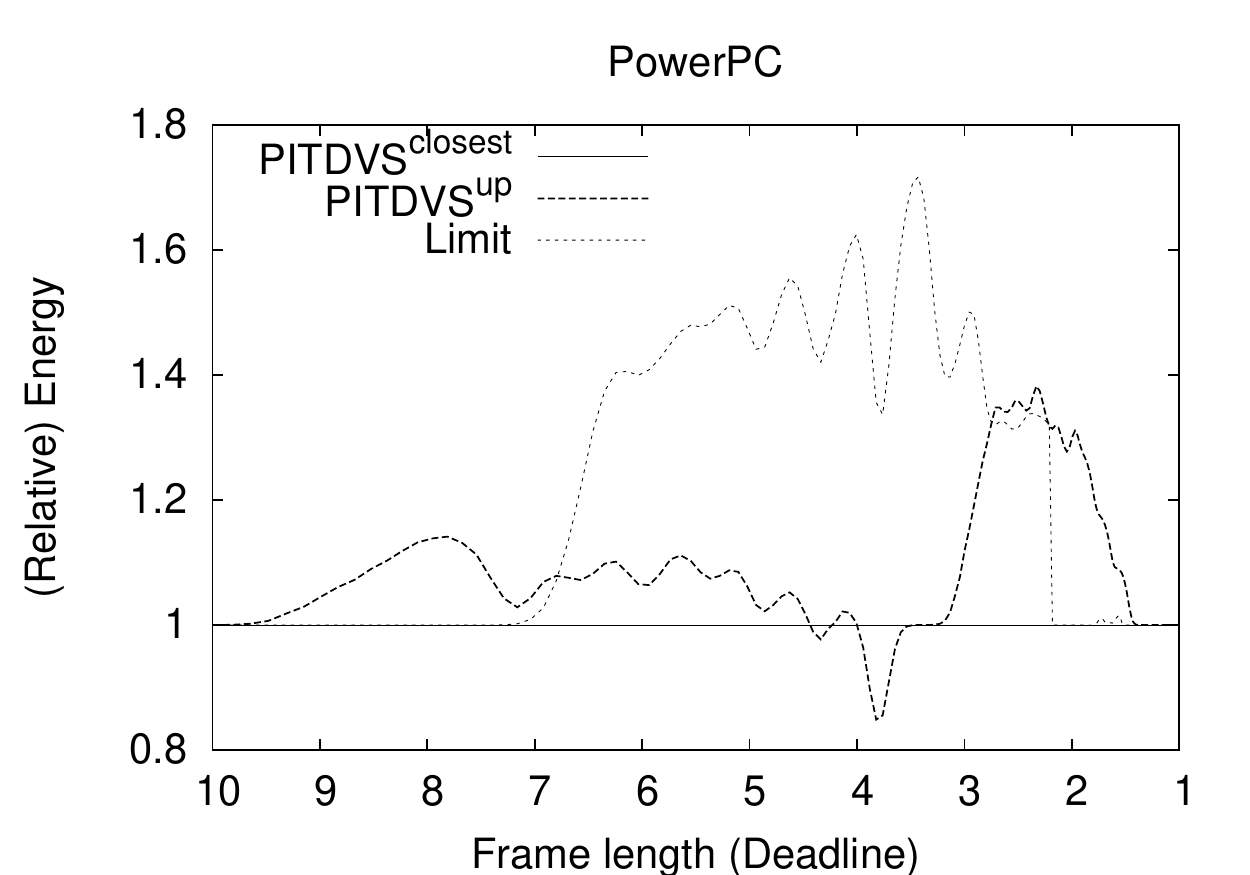}
\includegraphics[width=\figwidth]{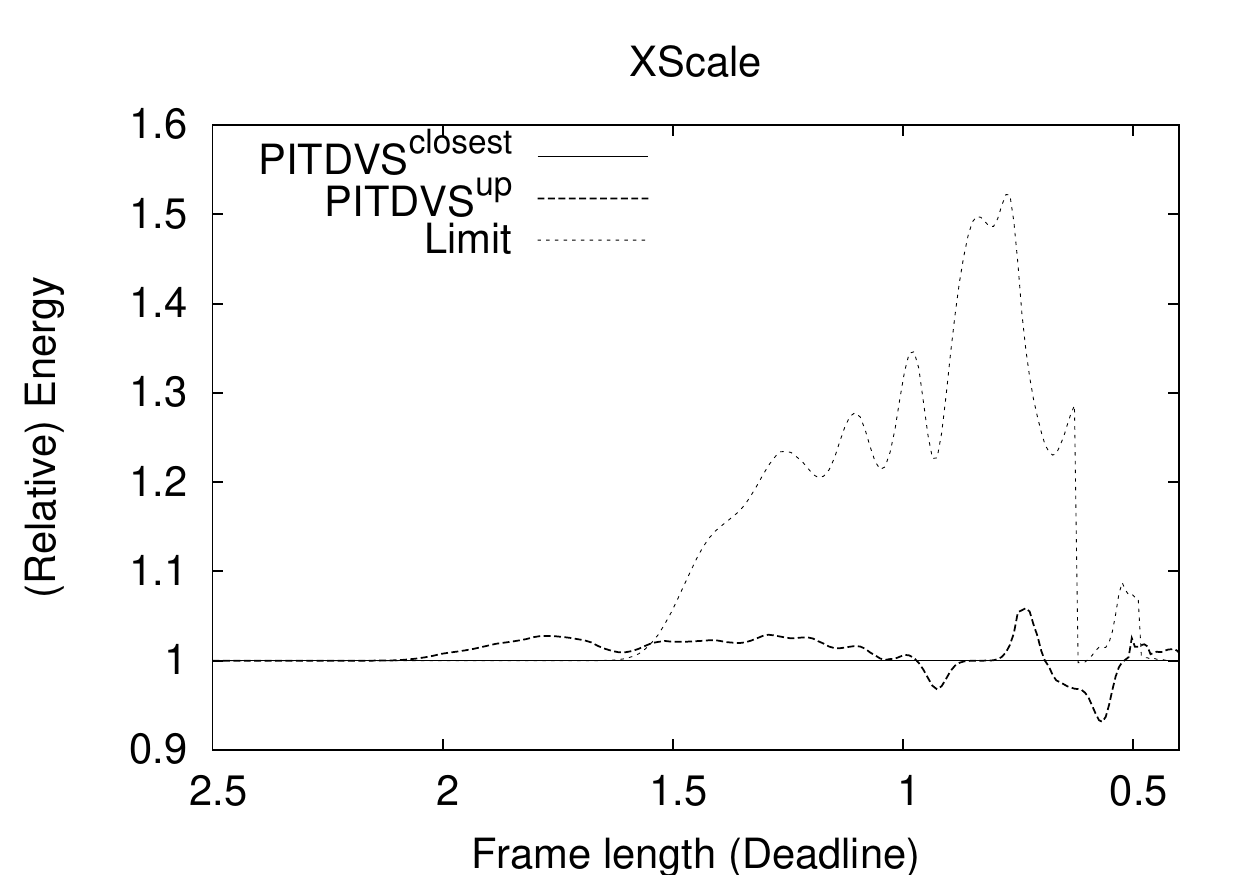}
\includegraphics[width=\figwidth]{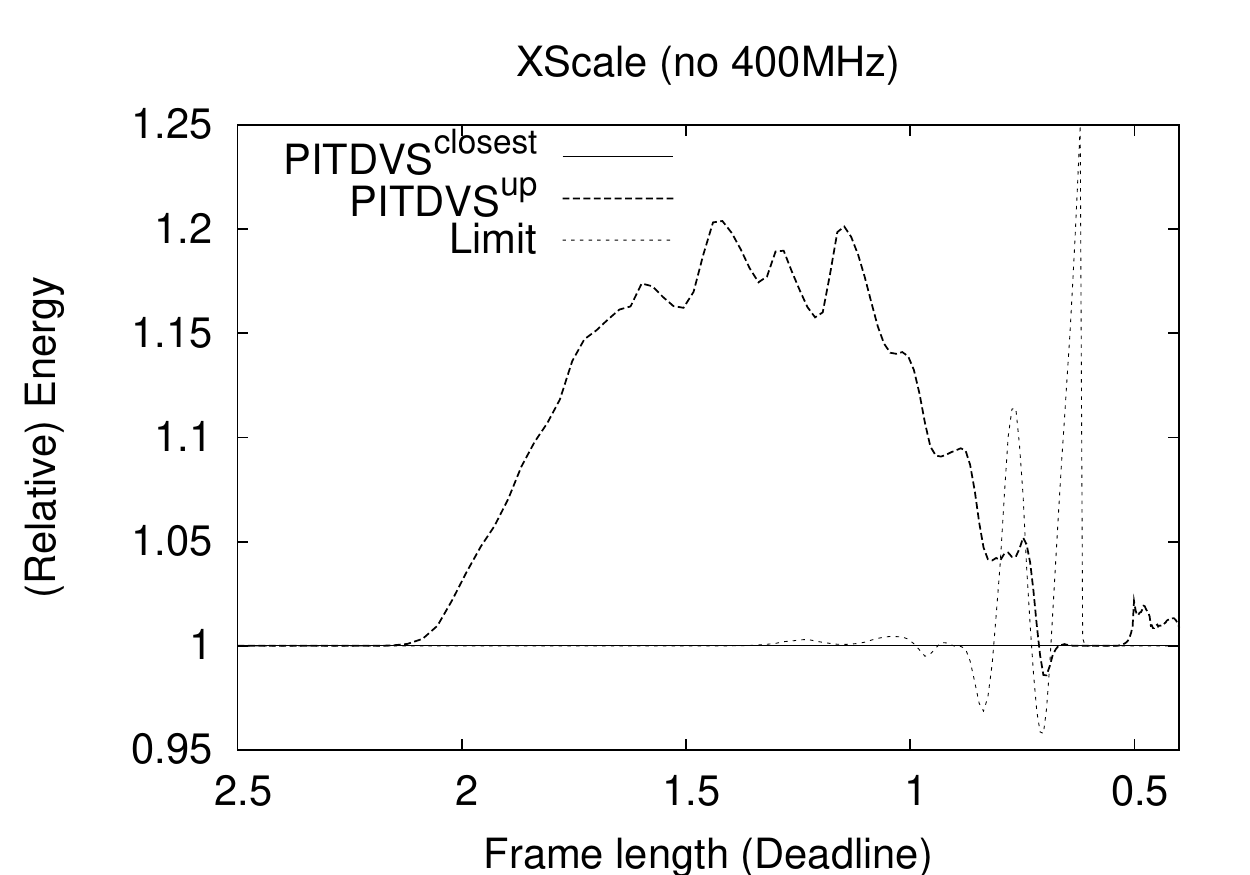}
\end{center}
\caption{\label{fig:Div}Energy consumption relative to {\sffamily PITDVS$^{\text{closest}}$}, for a set of 8 tasks distributed as shown in Figure \ref{fig:DivDistr}.}
\end{figure*}

In order to evaluate the advantage of using a ``closest'' approach instead of an ``upper bound'' approach, we applied it on two methods. The first is one described by Moss\'{e} et al. in \cite{Mosse00}, and is called DPM-S (Dynamic Power Management-Statistical), and the second one is described by described by Xu, Melhem and Moss\'e~\cite{Xu07b}, called PITDVS (Practical Inter-Task DVS).

\subsection{DPM-S}
The method DPM-S described in \cite{Mosse00} bets that the next jobs will not need more cycles than their average, and compute then the speed making this assumption when a job starts. Of course, the schedulability limit is also taken into account. In their paper, the authors consider that they can use any (normalized) frequency between $0$ and $1$. In order to apply this method on a system with a limited number of frequencies, we can either round them up, or use or ``closest'' approach. They don't take frequency change overheads into account, but according to what we claimed hereabove, those overheads are easy to integrate.

We compute now the two following step functions in this way, where $avg_i$ stands for the average number of cycles of $T_i$: in  Algorithm~\ref{alg:closest} adapted to take frequency changes overhead into account (cf Section~\ref{sec:overhead}),
\begin{itemize}
   \item {\sffamily DPM-S$^{\text{up}}$}: we replace $\mathcal{S}_i^{-1}$ by
   \begin{equation}\label{eq:dpmsup}
      D - \frac{\sum_{j=i}^N avg_i}{f_{j-1}};
   \end{equation}

   \item {\sffamily DPM-S$^{\text{closest}}$}: we replace $\mathcal{S}_i^{-1}$ by
   \begin{equation}\label{eq:dpmsclosest}
      D - \frac{\sum_{j=i}^N avg_i}{f}.
   \end{equation}
\end{itemize}

\subsection{PITDVS}
The second method we consider, by Xu, Melhem and Moss\'e~\cite{Xu07b}, is called PITDVS (Practical Inter-Task DVS), and aims at patching OITDVS (Optimal Inter-Task DVS \cite{Xu05}), an optimal method for ideal processors (with a continuous range of available frequencies). They apply several patches in order to make this optimal method usable for realistic processors. They start by taking speed change overhead into account, then they introduce maximal and minimal speed (OITDVS assumes speed from 0 to infinity), and finally, they round up the $S$-function to the smallest available frequency. It is in this last patch that we apply our technique.
Using the $\beta_i$ value described in \cite{Xu07b} (representing the aggressiveness level), we compute the step functions in the following way: in  Algorithm~\ref{alg:closest} adapted to take frequency changes overhead into account (cf Section~\ref{sec:overhead}),
\begin{itemize}
   \item {\sffamily PITDVS$^{\text{up}}$} (in \cite{Xu07b}): we replace $S_i^{-1}$ by
   \begin{equation}\label{eq:pitdvsup}
      D - P_T \times(N-i) - \frac{w_i}{\beta_i f_{j-1}};
   \end{equation}

   \item {\sffamily PITDVS$^{\text{closest}}$} (our adaptation): we replace $S_i^{-1}$ by
   \begin{equation}\label{eq:pitdvsclosest}
      D - P_T \times(N-i) - \frac{w_i}{\beta_i f}.
   \end{equation}
\end{itemize}

In the following, we also run simulations using $\mathcal{L}$ ({\sffamily Limit}) to choose the frequency. Our aim was not to show how efficient or how bad this technique is, but more to show that often, we observe rather counterintuitive results.

\subsection{Workloads and Simulation Architecture}
For the simulations we present bellow, we use two different sets of workloads. The first one is pretty simple, and quite theoretical. We use a set of 12 tasks, each of them having lengths uniformly distributed, between miscellaneous bounds, different from each other.
For the second set of simulations, we used several workloads coming from video decoding using H.264, which is used in our lab for some other experiments on a TI DaVinci DM6446 DVS processor. On Figure~\ref{fig:DivDistr}, we show the distribution of the 8 video clips we used, each with several thousands of frames.

We present here experimental results run for two different kinds of DVS processors (see for instance~\cite{Xu07} for details about characteristics): a XScale Intel processor (with frequencies 150, 400, 600, 800 and 1000MHz), and a PowerPC 405LP (with frequencies 33, 100, 266 and 333MHz). We took frequency change overhead into account, but the contribution of change overhead was usually negligible for all of the simulations we performed (lower that 0.1\% in most cases).
As a third CPU, we used the characteristics of XScale, but we disabled one of its available frequency (400MHz in the plots we show here), in order to highlight the advantage of using our approximation against round up approximation when the number of available frequencies is quite low.

\subsection{Simulations}
\begin{figure*}[ht]
\begin{center}
\newcommand{\lw}{0.23\linewidth}
\includegraphics[width=\lw]{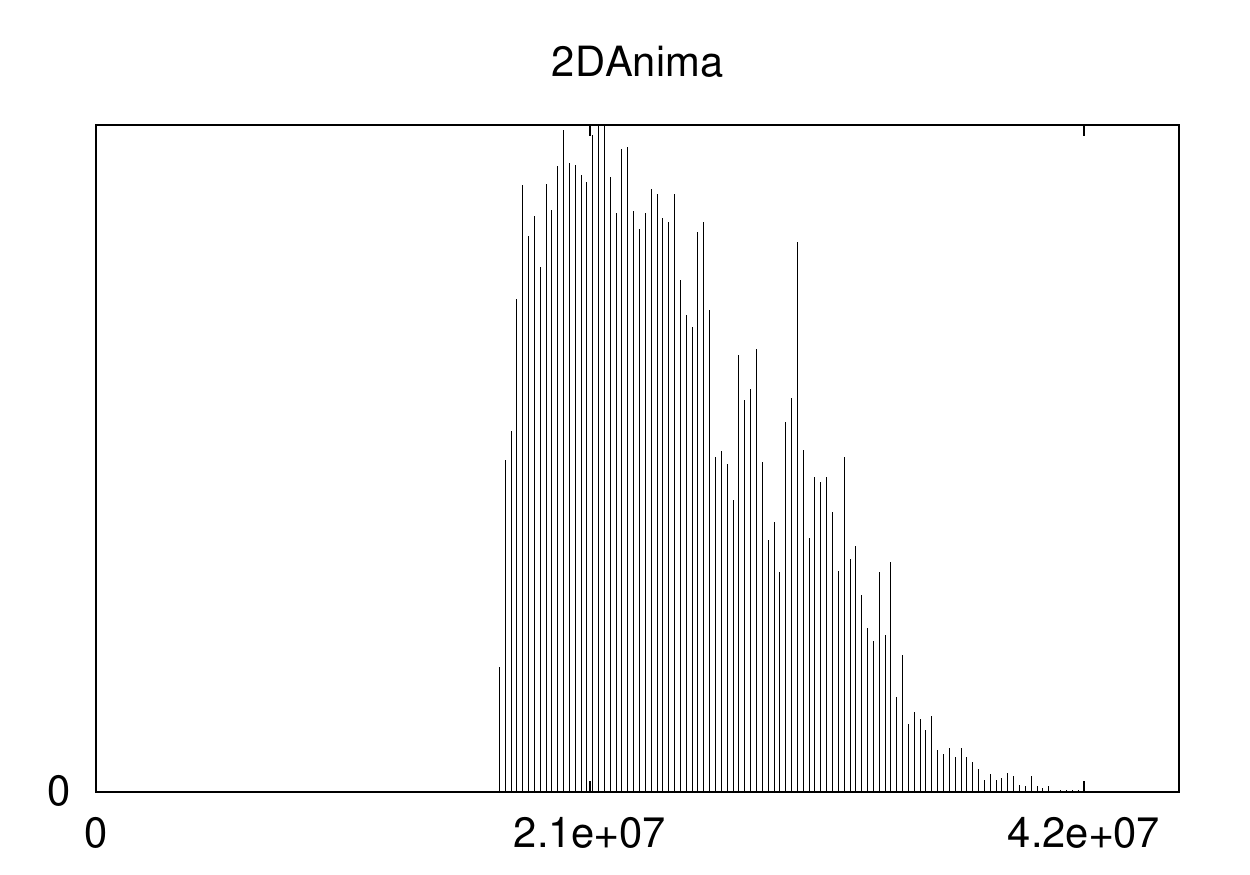}
\includegraphics[width=\lw]{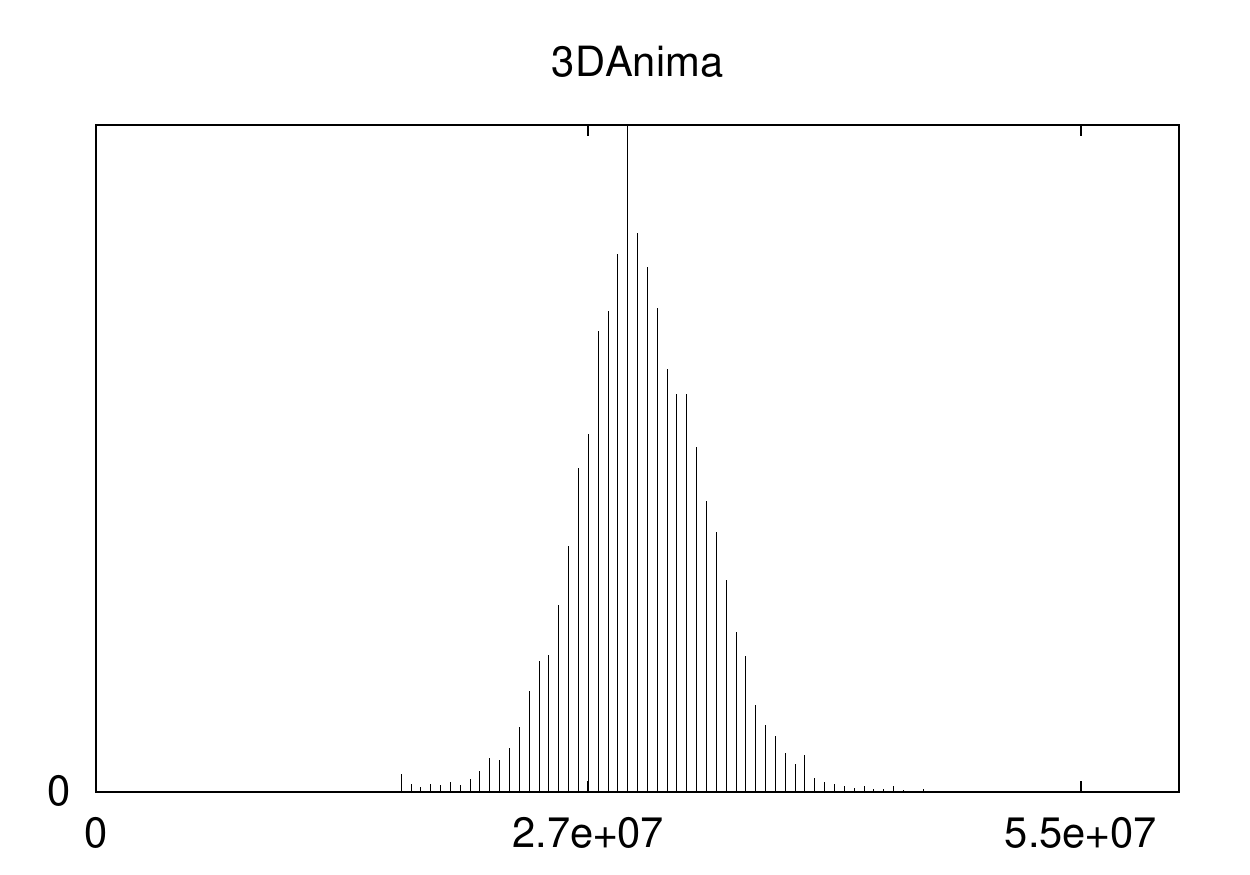}
\includegraphics[width=\lw]{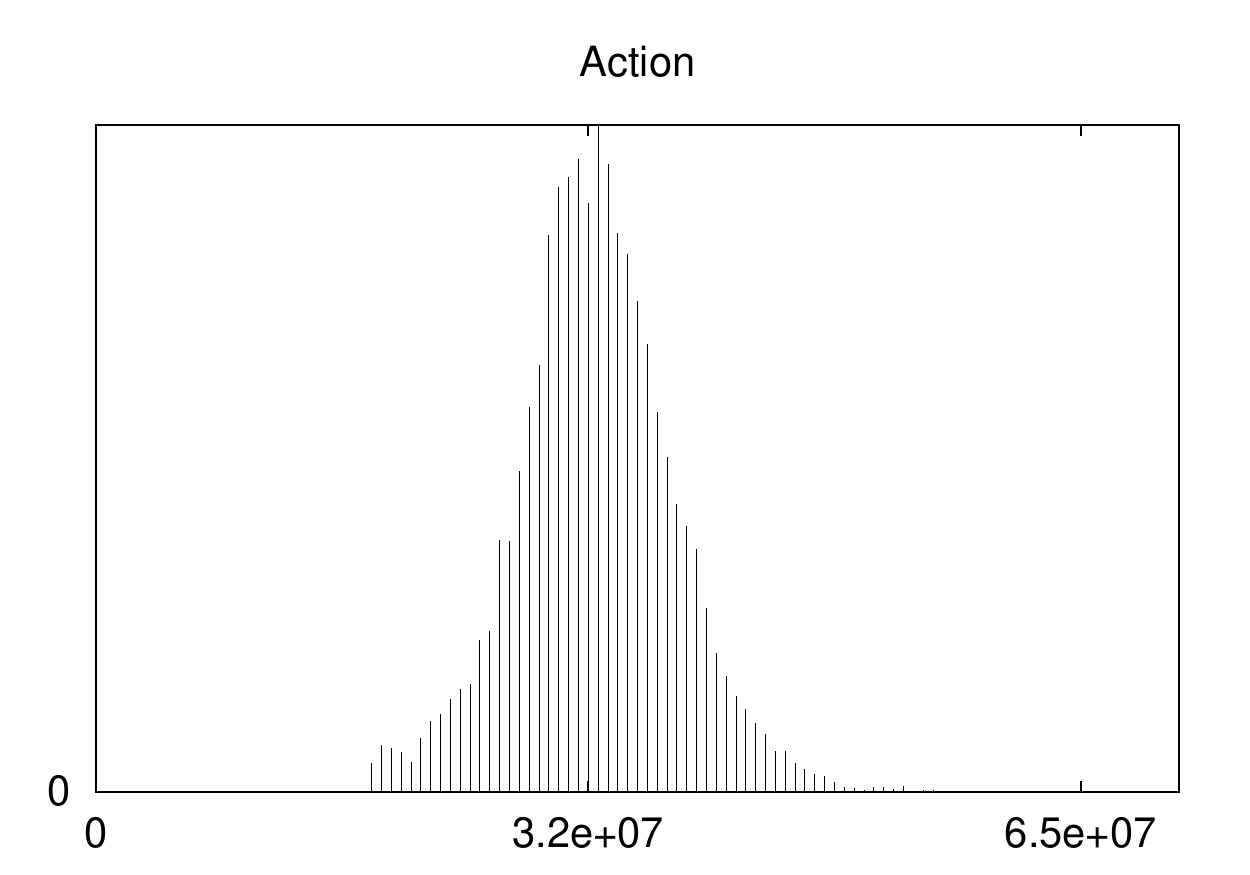}
\includegraphics[width=\lw]{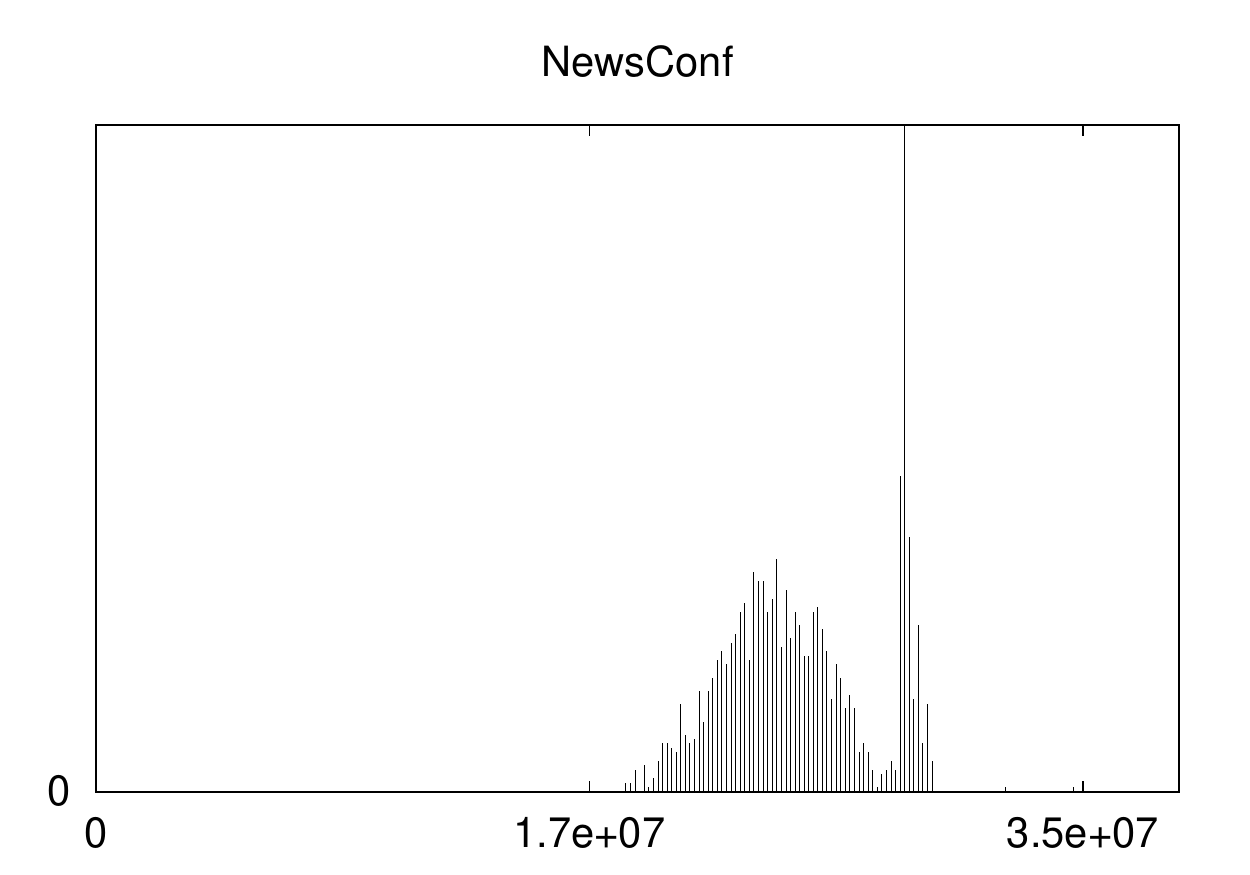}
\includegraphics[width=\lw]{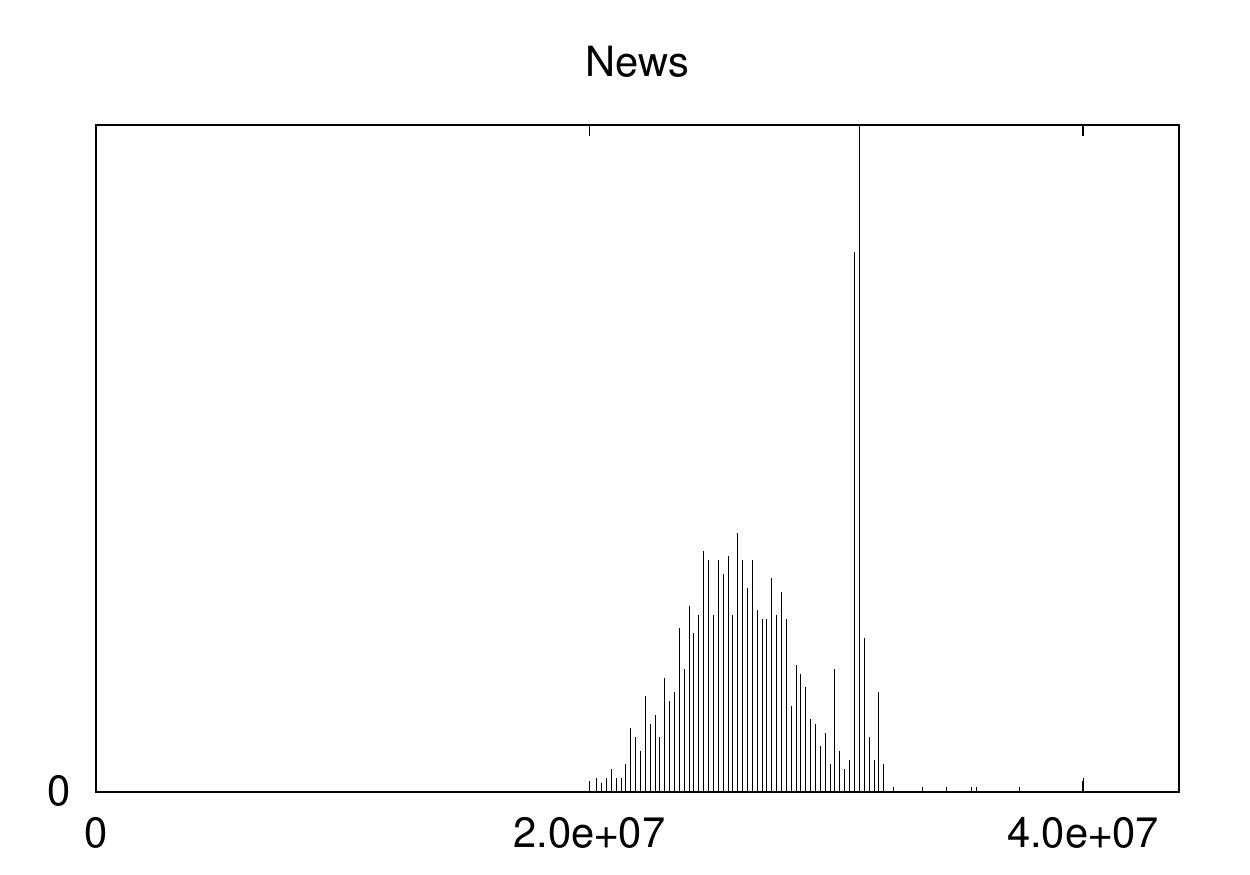}
\includegraphics[width=\lw]{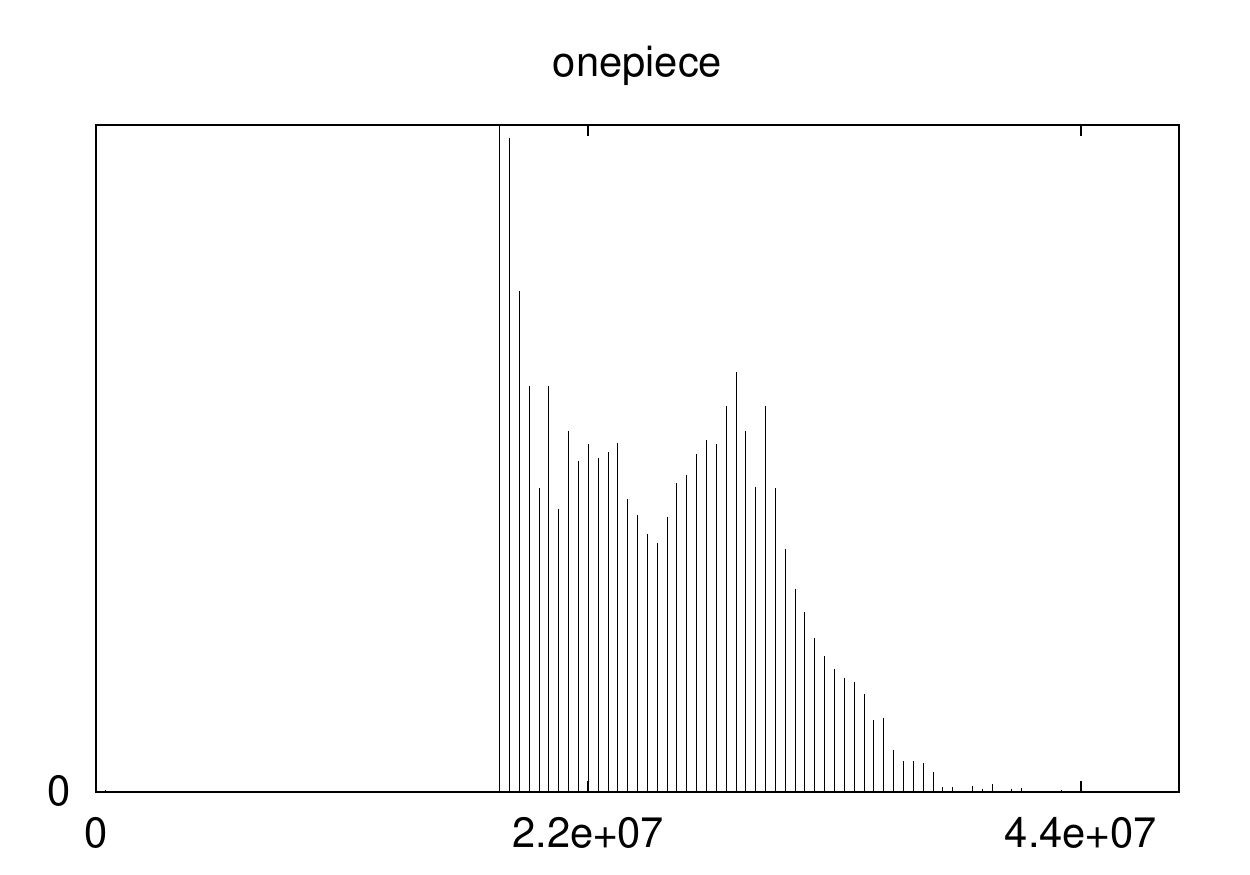}
\includegraphics[width=\lw]{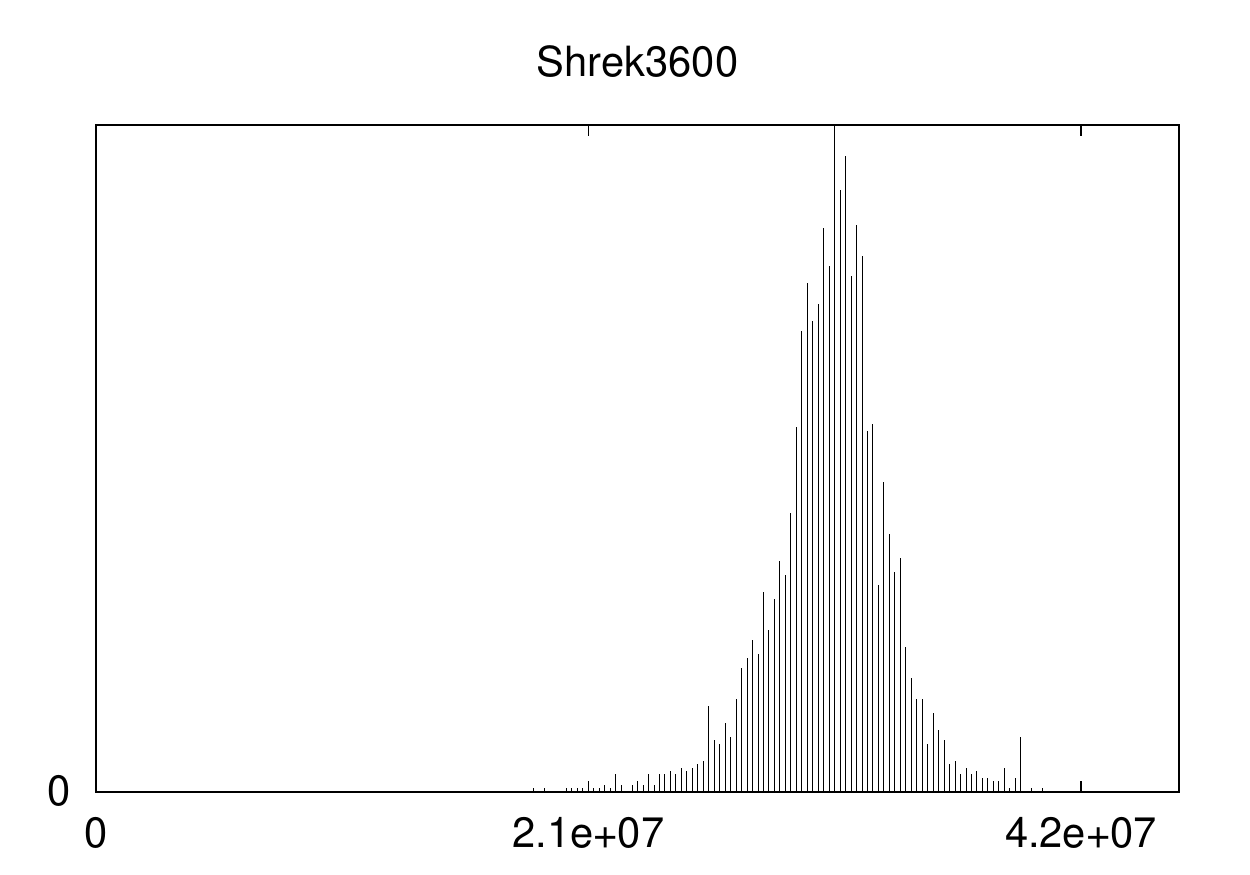}
\includegraphics[width=\lw]{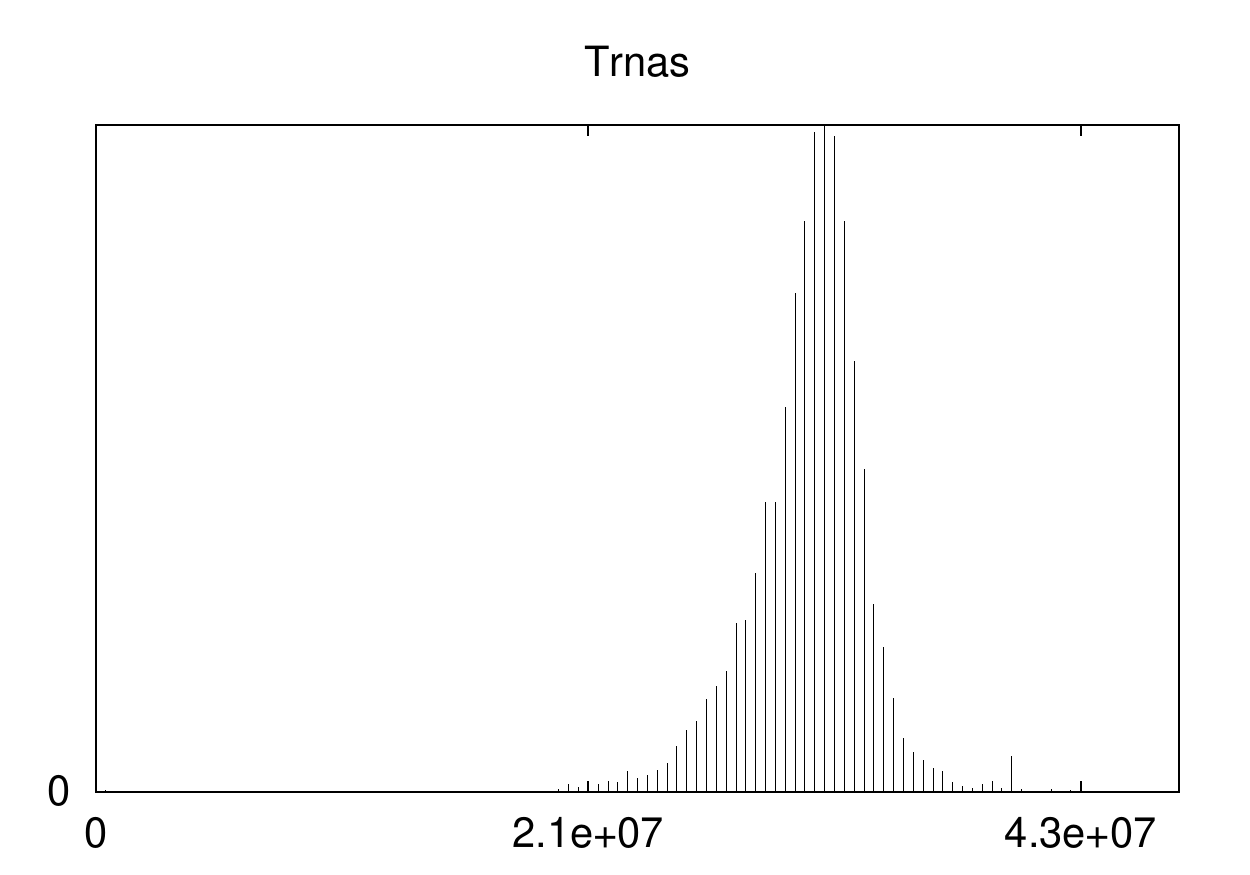}
\end{center}
\caption{\label{fig:DivDistr} Distribution of the number of cycles needed to decode different kinds of video, ranging from news streaming to complex 3D animations. The x-axis is the number of cycles, and the y-axis the probability.}
\end{figure*}

We performed a large number of simulations in order to compare the energy performance of ``round up'' and ``round to closest''. We compare several processor characteristics, and several job characteristics. We both use theoretical models and realistic values extracted from production systems.

For the figures we present here, we simulated the same system with different strategies computed with variations of Algorithm~\ref{alg:closest}, amongst {\sffamily DPM-S$^{\text{closest}}$} (Eq.~\eqref{eq:dpmsclosest}), {\sffamily DPM-S$^{\text{up}}$} (Eq.~\eqref{eq:dpmsup}),  {\sffamily PITDVS$^{\text{closest}}$} (Eq.~\eqref{eq:pitdvsclosest}), {\sffamily PITDVS$^{\text{up}}$} (Eq.~\eqref{eq:pitdvsup}) and {\sffamily Limit} (Algorithm~\ref{alg:limit}), computed the energy consumption, and presented the ratio of this energy to {\sffamily PITDVS$^{\text{closest}}$} or {\sffamily DPM-S$^{\text{closest}}$}. We then performed the same system, but for various deadlines, going from the deadline allowing to run any task at the lowest frequency ($D = \frac{1}{f_1} \sum_{i=1}^N w_i$), to the smallest deadline allowing to run any task at the higher frequency ($D = \frac{1}{f_M} \sum_{i=1}^N w_i$). We even used smaller deadlines, because this limit represents a frame where each task needs at the same time its WCEC, which has a very tiny probability to occur.
We can consider that decreasing the deadline boils down to increase the load: the smaller the deadline, the higher the average frequency. And quite intuitively, for small and large deadline (or frame length), we don't have any difference between strategies, because they all use always either the lowest (large deadline) or the highest (small deadline) frequency.

A first observation was that in many cases, the $S$-function of {\sffamily PITDVS$^{\text{up}}$} was already almost equal to {\sffamily Limit}. As a consequence, we could not observe any difference between {\sffamily PITDVS$^{\text{up}}$} and {\sffamily PITDVS$^{\text{closest}}$}. We can for instance see this on Figure~\ref{fig:Unif}, right plot: for deadlines between 0.1 and 0.06, we don't see any difference between {\sffamily PITDVS$^{\text{closest}}$} and {\sffamily Limit}.

In the first set of simulations (Figures~\ref{fig:Unif-DPMS} and \ref{fig:Unif}), we used 12 tasks, each of them having a uniformly distributed number of cycles, with miscellaneous parameters. On the PowerPC processor, we observe a large variety in performance comparison. According to the load (or the frame length), we see that {\sffamily PITDVS$^{\text{closest}}$} can gain around 30\% compared to {\sffamily PITDVS$^{\text{up}}$}, or lose almost 20\%, while we obtain similar comparison for {\sffamily DPM-S$^{\text{closest}}$} and {\sffamily DPM-S$^{\text{up}}$}, but with smaller values.

We observe also very abrupt and surprising variations, such as in Figure~\ref{fig:Unif}, middle and right, for {\sffamily Limit}, around 0.03. A closer look around to variations shows that they usually occurs when the frequency of $T_1$ changes. Indeed, as $T_1$ starts always at time 0, its speed does not really depends upon $S_1(t)$, but only upon $S_1(0)$. So when $D$ varies, $S_1(0)$ goes suddenly from one frequency to another one. Then a very slight variation of $D$ could have a big impact of each frame. Those slight variations do not have the same impact for other tasks, because of the stochastic nature of tasks length. For instance, if we slightly change $S_i$ ($i\ne 1$), it will only impact a few task speeds. But slight changes in $S_0$ have either no impact at all, or an impact on every task in every frame.

From those first figures, we can for sure not claim that doing a ``closest'' approach is always better than a ``upper bound''. But those simulations highlight that there are certainly situations where one approach is better than the other one, and situations with the other way around. System designers should then pay attention to the way they round continuous frequencies. With a very small additional effort, we can often do better than simply round up the original scheduling function.

For the second set of simulations (using real video workloads), on Figures~\ref{fig:Div-DPMS} and \ref{fig:Div}, we observe the same kind of differences as from the previous experiments: according to the configuration, one round method is better than the other one. With PowerPC configuration, {\sffamily PITDVS$^{\text{closest}}$} is better than {\sffamily PITDVS$^{\text{up}}$}, but {\sffamily DPM-S$^{\text{up}}$} seems to be better than {\sffamily DPM-S$^{\text{closest}}$}. However, with the XScale processor where we disabled one frequency, both ``closest'' methods are better than ``up'' methods. Remark that we observe the same kind of benefit by disabling another frequency than 400MHz.

From the many experiments we performed, it seems that our approach is especially interesting when the number of available frequencies is limited, which is not surprising. Indeed, the less available frequency, the further from the continuous model. As the two strategies we adapt where basically designed from continuous model, and as our adaptation attempts to be closer from the original strategy than the classical adaptation, we would have expected such behavior.

We have also observed than ``smooth'' systems such as the one with uniform distribution --- but we have simulated other distributions such as normal or bimodal normal distribution --- do not give smoother curves than with the realistic workload, even if several of them contain very chaotic data. The irregular behavior of our curve does not seem to be related to irregular data, but more to the fact that, as already mentioned slight variations in $S_0$ can have a big impact on the average energy.
In this paper, we do not present a huge number of simulations, because we do not claim that our approach is always better: what we present should be enough to persuade system designers to have a deeper look at the way they manage discretization.

\section{Conclusions and Future Work}
\label{sec:ccl}
The aim of our work was twofold. First, we presented a simple schedulability condition for frame-based low-power stochastic real-time systems. Thanks to this condition, we are able to quickly check that any scheduling function guarantees the schedulability of the system, even when frequency change overheads are taken into account. This test can either be used off-line to check that a scheduling function is schedulable, or on-line, after some parameter changes, to check whether the functions can still be used.

The second contribution of this paper was to use this schedulability condition in order improve the way a strategy developed for systems with continuous speeds can be adapted for systems with a discrete set of available speeds. We show that our approach is not always better that the classical one consisting in rounding up to the first available frequency, but can in some circumstances, give a gain up to almost 40\% in the simulations we presented.

Our future work includes several aspects. First, by running much more simulations, we would like to identify more precisely when our approach is better than the classical one. It would allow system designers to be able to choose the approach to use without running simulation, or making experiments on their system.

Another aspect we would like to consider is to have a deeper look to how the schedulability test we provide will allow to improve the robustness of a system. If particular, if we observe that a job has required more than its (expected) worst case number of cycles, how can we adapt temporarily our system in order to improve its schedulability, before we can compute the new set of functions, using those new parameters.

\bibliographystyle{acm}
%\bibstyle{plain}
\nocite{*}

\bibliography{SchedFrame}
\end{document}